\documentclass{birkjour}
\pdfoutput=1
\usepackage[ascii]{inputenc}
\usepackage{hyperref,dcolumn,bbm,url,booktabs,braket,microtype,aliascnt,mathtools,mathrsfs,etoolbox,cancel,capt-of,algpseudocode,float,newfloat,amsmath,amssymb,color,amsthm,cleveref,tikz}
\usepackage[T1]{fontenc}
\usepackage[USenglish]{babel}

\newtheorem{thm}{Theorem}\crefname{thm}{theorem}{theorems}
\newtheorem{lem}[thm]{Lemma}\crefname{lem}{lemma}{lemmas}
\newtheorem{cor}[thm]{Corollary}\crefname{cor}{corollary}{corollaries}
\newtheorem{prp}[thm]{Proposition}\crefname{prp}{proposition}{propositions}
\newtheorem{prb}[thm]{Problem}\crefname{prb}{problem}{problems}
\newtheorem{dfn}[thm]{Definition}\crefname{dfn}{definition}{definitions}

\DeclareMathOperator{\SU}{SU}
\DeclareMathOperator{\U}{U}

\DeclareMathOperator{\tr}{tr}
\DeclareMathOperator{\spec}{spec}

\DeclareMathOperator{\Ha}{H}

\DeclareMathOperator{\poly}{poly}
\DeclareMathOperator{\Hom}{Hom}
\DeclareMathOperator{\Ind}{Ind}
\newcommand{\CG}{\Phi}
\newcommand{\calH}{\mathcal H}
\newcommand{\calK}{\mathcal K}
\newcommand{\calL}{\mathcal L}
\newcommand{\calA}{\mathcal A}
\newcommand{\calB}{\mathcal B}
\newcommand{\calC}{\mathcal C}

\newcommand{\proj}[1]{\lvert#1\rangle\langle#1\rvert}
\newcommand{\abs}[1]{\lvert#1\rvert}
\newcommand{\norm}[1]{\lVert#1\rVert}
\newcommand{\normHS}[1]{\norm{#1}_\mathrm{HS}}
\newcommand{\id}{\mathbbm 1}
\usetikzlibrary{matrix,arrows,external,shapes,positioning,decorations.markings,calc}
\newcommand*\circled[1]{\tikz[baseline=(char.base)]{\node[shape=circle,draw,inner sep=1pt] (char) {#1};}}
\def\centerarc[#1](#2)(#3:#4:#5){\draw[#1]($(#2)+({#5*cos(#3)},{#5*sin(#3)})$) arc (#3:#4:#5);}
\tikzstyle epr=[fill=black]
\tikzstyle adjoint epr=[fill=black]
\tikzstyle adjoint=[]
\tikzstyle arrowstyle=[scale=1.25]
\tikzset{->-/.style={decoration={markings,mark=at position #1 with {\arrow[arrowstyle]{stealth};}},postaction={decorate}}}
\tikzset{-<-/.style={decoration={markings,mark=at position #1 with {\arrowreversed[arrowstyle]{stealth};}},postaction={decorate}}}

\newcommand{\strongsixjCG}[4]{
  \begin{tikzpicture}[baseline=0cm]
    \draw[-<-=.3] (0,-0.2) -- (0,-0.7);
    \draw[->-=.6] (0.177,0.1) -- (0.766, 0.5);
    \draw[->-=.6] (-0.177,0.1) -- (-0.766,0.5);
    \draw (0,0) circle[radius=0.2];
    \draw (0,0) node {#4};
    \draw (-0.766,0.5) node[above] {#2};
    \draw (0.766, 0.5) node[above] {#3};
    \draw (0,-0.7) node[below] {#1};
  \end{tikzpicture}
}

\newcommand{\strongsixjCGadjoint}[4]{
  \begin{tikzpicture}[baseline=0cm]
    \draw[->-=.6] (0,-0.2) -- (0,-0.7);
    \draw[-<-=.35] (0.177,0.1) -- (0.766, 0.5);
    \draw[-<-=.35] (-0.177,0.1) -- (-0.766,0.5);
    \draw[adjoint] (0,0) circle[radius=0.2];
    \draw (0,0) node {#4};
    \draw (-0.766,0.5) node[above] {#2};
    \draw (0.766, 0.5) node[above] {#3};
    \draw (0,-0.7) node[below] {#1};
  \end{tikzpicture}
}

\newcommand{\strongsixjCGinvariant}[1]{
    \begin{tikzpicture}[baseline=0cm]
      \draw[->-=.8] (0,-0.45) -- (0,-0.2);
      \draw[->-=.65] (0,-0.6) -- (0,-0.9);
      \draw[->-=.6] (0.177,0.1) -- (0.766, 0.5);
      \draw[->-=.6] (-0.177,0.1) -- (-0.766,0.5);
      \draw (0,0) circle[radius=0.2];
      \draw (0,0) node {#1};
      \draw[epr] (0,-0.52) circle[radius=0.1];
      \draw (-0.766,0.5) node[above] {$\alpha$};
      \draw (0.766, 0.5) node[above] {$\beta$};
      \draw (0,-0.8) node[below] {$\lambda$};
    \end{tikzpicture}
}

\newcommand{\strongsixjCGadjointinvariant}[1]{
    \begin{tikzpicture}[baseline=0cm]
      \draw[->-=.6] (0,-0.2) -- (0,-0.7);
      \draw[->-=.7] (0.57,0.35) -- (0.786,0.5);
      \draw[->-=.7] (0.413, 0.25) -- (0.177,0.1);
      \draw[->-=.7] (-0.57,0.35) -- (-0.786,0.5);
      \draw[->-=.7] (-0.413, 0.25) -- (-0.177,0.1);
      \draw[adjoint] (0,0) circle[radius= 0.2];
      \draw[epr] (0.491, 0.3) circle[radius=0.1];
      \draw[epr] (-0.491, 0.3) circle[radius=0.1];
      \draw (0,0) node {#1};
      \draw (-0.766,0.5) node[above] {$\alpha$};
      \draw (0.766, 0.5) node[above] {$\beta$};
      \draw (0,-0.7) node[below] {$\lambda$};
    \end{tikzpicture}
}

\newcommand{\strongsixjAsymmetric}{
  \begin{tikzpicture}[baseline=0cm]
    \draw[->-=.6] (0.173,0.1) -- (0.693, 0.4);
    \draw[->-=.6] (-0.173, 0.1) -- (-0.693, 0.4);
    \draw[-<-=.35] (0,-0.2) -- (0,-0.8);
    \draw[adjoint] (0.866, 0.5) circle[radius= 0.2];
    \draw[adjoint] (-0.866, 0.5) circle[radius= 0.2];
    \draw (0, -1) circle[radius= 0.2];
    \draw (0,0) circle[radius= 0.2];
    \draw[->-=.6] (0.19,-0.98) arc [radius=1, start angle= -77, end angle= 18];
    \draw[->-=.6] (0.748,0.662) arc [radius=1, start angle= 43, end angle= 138];
    \draw[->-=.6] (-0.967,0.319) arc [radius=1, start angle= 163, end angle= 259];
    \draw (0,-1) node {$i$};
    \draw (0,0) node {$j$};
    \draw (0.866, 0.5) node {$k$};
    \draw (-0.866, 0.5) node {$l$};
    \draw (-0.53,0.3) node[below] {$\alpha$};
    \draw (0.53,0.3) node[below] {$\beta$};
    \draw (0.866,-0.5) node[below] {$\gamma$};
    \draw (0, -0.5) node[right] {$\mu$};
    \draw (0,1) node[below] {$\nu$};
    \draw (-0.866,-0.5) node[below] {$\lambda$};
  \end{tikzpicture}
}

\newcommand{\strongsixjAsymmetricTeleported}{
  \begin{tikzpicture}[baseline=0cm, scale=1.4]
    \draw (-0.53,0.2) node[below] {$\alpha$};
    \draw[->-=.55] (-0.54,0.31) -- (-0.866, 0.5);
    \draw[->-=.66] (-0.54,0.31) -- (-0.33, 0.18);
    \draw[->-=.7] (0,0) -- (-0.33, 0.18);
    \draw[adjoint epr] (-0.56,0.32) circle[radius=0.1/1.4];
    \draw[adjoint epr] (-0.31,0.17) circle[radius=0.1/1.4];

    \draw (0.53,0.3) node[below] {$\beta$};
    \draw[->-=.55] (0.54,0.31) -- (0.866, 0.5);
    \draw[->-=.66] (0.54,0.31) -- (0.33, 0.18);
    \draw[->-=.7] (0,0) -- (0.33, 0.18);
    \draw[adjoint epr] (0.56,0.32) circle[radius=0.1/1.4];
    \draw[adjoint epr] (0.31,0.17) circle[radius=0.1/1.4];

    \draw (0, -0.5) node[right] {$\mu$};
    \draw[->-=.75] (0,-1) -- (0,-0.65);
    \draw[->-=.65] (0,-0.35) -- (0,-0.65);
    \draw[->-=.5] (0,-0.35) -- (0,0);
    \draw[adjoint epr] (0,-0.355) circle[radius=0.1/1.4];
    \draw[adjoint epr] (0,-0.65) circle[radius=0.1/1.4];

    \draw (0,1) node[below] {$\nu$};
    \centerarc[->-=0.57](0,0)(35:70:1);
    \centerarc[-<-=0.35](0,0)(70:120:1);
    \centerarc[->-=0.5](0,0)(120:145:1);
    \draw[adjoint epr] (0.35,0.92) circle[radius=0.1/1.4];
    \draw[adjoint epr] (-0.35,0.92) circle[radius=0.1/1.4];

    \draw (0.866,-0.5) node[right] {$\gamma$};
    \centerarc[->-=0.75](0,0)(-90:-50:1);
    \centerarc[-<-=0.35](0,0)(-40:-13:1);
    \centerarc[->-=0.65](0,0)(-10:25:1);
    \draw[adjoint epr] (1,-0.15) circle[radius=0.1/1.4];
    \draw[adjoint epr] (0.7,-0.7) circle[radius=0.1/1.4];

    \draw (-0.866,-0.5) node[left] {$\lambda$};
    \centerarc[-<-=0.6](0,0)(270:230:1);
    \centerarc[->-=0.6](0,0)(220:193:1);
    \centerarc[-<-=0.4](0,0)(190:150:1);
    \draw[adjoint epr] (-1,-0.15) circle[radius=0.1/1.4];
    \draw[adjoint epr] (-0.7,-0.7) circle[radius=0.1/1.4];


    \draw[fill=white] (0,-1) circle[radius=0.2/1.4];
    \draw (0,-1) node {$i$};
    \draw[fill=white] (0,0) circle[radius=0.2/1.4];
    \draw (0,0) node {$j$};
    \draw[fill=white] (0.866, 0.5) circle[radius=0.2/1.4];
    \draw (0.866, 0.5) node {$k$};
    \draw[fill=white] (-0.866, 0.5) circle[radius=0.2/1.4];
    \draw (-0.866, 0.5) node {$l$};
  \end{tikzpicture}
}

\newcommand{\strongsixjSymmetric}{
  \begin{tikzpicture}[baseline=0cm]
    \draw (0,-1) node {$i$};
    \draw (0,-1) circle[radius= 0.2];
    \draw (0,0) node {$j$};
    \draw (0,0) circle[radius= 0.2];
    \draw (0.866, 0.5) node {$k$};
    \draw (0.866, 0.5) circle[radius= 0.2];
    \draw (-0.866, 0.5) node {$l$};
    \draw (-0.866, 0.5) circle[radius= 0.2];

    \draw (-0.53,0.2) node[below] {$\alpha$};
    \draw[->-=.56] (-0.693,0.4) -- (-0.433, 0.25);
    \draw[->-=.56] (-0.173,0.1) -- (-0.433, 0.25);
    \draw[adjoint epr] (-0.43,0.24) circle[radius=0.1];

    \draw (0.53,0.3) node[below] {$\beta$};
    \draw[->-=.56] (0.693,0.4) -- (0.433, 0.25);
    \draw[->-=.56] (0.173,0.1) -- (0.433, 0.25);
    \draw[adjoint epr] (0.43,0.24) circle[radius=0.1];

    \draw (0, -0.5) node[right] {$\mu$};
    \draw[->-=.56] (0,-0.2) -- (0,-0.5);
    \draw[->-=.56] (0,-0.8) -- (0,-0.48);
    \draw[adjoint epr] (0,-0.5) circle[radius=0.1];

    \draw (0,1) node[below] {$\nu$};
    \draw[->-=.56] (0.748,0.662) arc [radius=1, start angle=43, end angle=90]; 
    \draw[-<-=.56] (0,1) arc [radius=1, start angle=90, end angle=138];
    \draw[adjoint epr] (0,0.99) circle[radius=0.1];

    \draw (0.866,-0.5) node[below] {$\gamma$};
    \draw[->-=.56] (0.19,-0.98) arc [radius=1, start angle=-77, end angle=-30]; 
    \draw[-<-=.56] (0.83,-0.5) arc [radius=1, start angle=-30, end angle=18];
    \draw[adjoint epr] (0.86,-0.45) circle[radius=0.1];

    \draw (-0.866,-0.5) node[below] {$\lambda$};
    \draw[->-=.56] (-0.967,0.319) arc [radius=1, start angle=163, end angle=210]; 
    \draw[-<-=.56] (-0.866,-0.5) arc [radius=1, start angle=210, end angle=259];
    \draw[adjoint epr] (-0.87,-0.45) circle[radius=0.1];
  \end{tikzpicture}
}

\newcommand{\sixj}[6]{\begin{bmatrix}#1 & #2 & #4\\#3 & #6 & #5\end{bmatrix}}

\allowdisplaybreaks[4]

\newcommand*{\CC}{\mathbb C}
\newcommand*{\RR}{\mathbb R}

\begin{document}

\title{Recoupling coefficients and quantum entropies}
\author{Matthias Christandl}
\address{Institute for Theoretical Physics, ETH Zurich, Wolfgang--Pauli--Strasse 27, 8093 Zurich, Switzerland, \\
  Department of Mathematical Sciences, University of Copenhagen, Universitetsparken 5, 2100 Copenhagen, Denmark}
\email{christandl@math.ku.dk}

\author{M.~Burak \c{S}ahino\u{g}lu}
\address{Institute for Theoretical Physics, ETH Zurich, Wolfgang--Pauli--Strasse 27, 8093 Zurich, Switzerland, \\
  Vienna Center for Quantum Technology, University of Vienna, Boltzmanngasse 5, 1090 Vienna, Austria}
\email{burak.sahinoglu@univie.ac.at}

\author{Michael Walter}
\address{Institute for Theoretical Physics, ETH Zurich, Wolfgang--Pauli--Strasse 27, 8093 Zurich, Switzerland, \\
  Stanford Institute for Theoretical Physics, Stanford University, Stanford, CA 94305}
\email{michael.walter@stanford.edu}

\begin{abstract}
We prove that the asymptotic behavior of the recoupling coefficients of the symmetric group is characterized by a quantum marginal problem -- namely, by the existence of quantum states of three particles with given eigenvalues for their reduced density operators.
This generalizes Wigner's observation that the semiclassical behavior of the $6j$-symbols for $\SU(2)$ -- fundamental to the quantum theory of angular momentum -- is governed by the existence of Euclidean tetrahedra.
As a corollary, we deduce solely from symmetry considerations the strong subadditivity property of the von Neumann entropy. 
Lastly, we show that the problem of characterizing the eigenvalues of partial sums of Hermitian matrices arises as a special case of the quantum marginal problem.
We establish a corresponding relation between the recoupling coefficients of the unitary and symmetric groups, generalizing a classical result of Littlewood and Murnaghan.
\end{abstract}

\maketitle

\section{Introduction}

When does there exist a multi-particle quantum state compatible with a given set of reduced states?
Long recognized for its importance in many-body quantum physics and quantum chemistry \cite{coleman63,colemanyukalov00}, this \emph{quantum marginal problem} has seen significant progress in recent years in the context of quantum information theory through the discovery of an underlying group theoretic structure \cite{christandlmitchison06, klyachko04,daftuarhayden04}.
In particular, a complete solution of its most fundamental version, where the given reduced states are those of the individual particles, has been obtained \cite{klyachko04,daftuarhayden04}, and a firm connection to the classification of multiparticle entanglement has been established \cite{walter13, sawicki12, sawicki12pra}. Meanwhile, it has been understood that the general problem is computationally hard, even for a quantum computer \cite{liu06,liuchristandlverstraete07}, and only sufficient conditions are known. The strong subadditivity of the von Neumann entropy is perhaps the most important such condition, and an indispensible tool in quantum statistical physics and quantum information theory \cite{liebruskai73, ohyapetz04}.
The discovery of any further entropy inequality would be considered a major breakthrough \cite{pippenger03} (cf.~\cite{lindenwinter05,cadneylindenwinter12,LindenMatusRuskaiEtAl13,gross2013stabilizer,hayden2013holographic,bao2015holographic} for recent progress).

In this work we unveil a novel link between the existence of multi-particle quantum states with given marginal eigenvalues and the representation theory of the symmetric group $S_k$. For this, we consider the \emph{recoupling coefficients} of $S_k$, where now $k$ plays the role of the semiclassical parameter. We then find that the coefficients' norm decreases at most polynomially for a sequence of Young diagrams of $k \rightarrow \infty$ boxes converging to the eigenvalues of a given tripartite quantum state $\rho_{ABC}$ and its reduced states $\rho_A$, $\rho_B$, $\rho_C$, $\rho_{AB}$, $\rho_{BC}$; conversely, if there exists no such quantum state then the sequence decreases exponentially in norm.

Our result directly relates to the recent efforts on the quantum marginal problem and the understanding of quantum entropy. In particular, it extends the characterisation of the triples $\rho_A$, $\rho_C$, $\rho_{AC}$ by the Kronecker coefficient of the symmetric group \cite{christandlmitchison06, klyachko04, daftuarhayden04, christandlharrowmitchison07, christandldorankousidiswalter12}. The power of this extension is illustrated by the fact that the symmetries of the recoupling coefficients alone imply the strong subadditivity and weak monotonicity of von Neumann entropy (note that entropy is only a function of the eigenvalues). Our result generalizes directly to an arbitrary number of particles and linearly many reduced states, and may thus be regarded as a partial quantum-mechanical version of Chan and Yeung's description of the set of compatible Shannon entropies in terms of group theory \cite{chanyeung02}.
In fact, our work suggests a new route towards establishing further entropy inequalities by exploiting the symmetries of higher-order representation-theoretic objects.

Our work is inspired by Wigner's seminal work on the semiclassical behavior of quantum spins, which are described by the representation theory of the group $\SU(2)$ \cite{wigner59}.
The recoupling coefficients of $\SU(2)$, known as the \emph{Wigner $6j$-symbols} in their rescaled, more symmetric form \cite{wigner59}, describe the relation between individual spins $j_1$, $j_2$, $j_3$, their total spin $j_{123}$ and the intermediate spins $j_{12}$ and $j_{23}$ (the Racah W-coefficients \cite{racah42} are also closely related).
As first noted by Wigner, there is a dichotomy similar to our result in the semiclassical limit where all spins are simultaneously large: the $6j$-symbol decays polynomially if there exists a tetrahedron with side lengths $j_1$, $j_2$, $j_3$, $j_{12}$, $j_{23}$, $j_{123}$, and exponentially otherwise \cite{wigner59}.
That the asymptotics are in both cases guided by the existence of a geometric object---for Wigner, a tetrahedron with certain side lengths, for us, a quantum state with certain spectral properties---is not accidental. Via Schur-Weyl duality, our limit $k \rightarrow \infty$ can similarly be understood as a semiclassical limit. 
We furthermore describe how to construct for every tetrahedron a tripartite quantum state in a faithful, i.e.\ sidelength-encoding way, and for every Wigner $6j$-symbol we construct a corresponding recoupling coefficient of the symmetric group .
In this way, Wigner's problem and its generalization to $\SU(d)$ can be understood as special case of the quantum marginal problem.

We speculate that this surprising connection between tetrahedra and quantum states as well as between Wigner $6j$-symbols and symmetric group recoupling coefficients may help to understand and connect the study of spin foams and spin networks in the context of quantum gravity \cite{ooguri92, reisenbergrovelli97, freidellouapre03, barrettsteele03, gurau, Aquilanti} and condensed matter physics \cite{LevinWen05}
as well as topological quantum computing \cite{jordan08,jordan10,akimotohayashi11}.
Preliminary versions of this work have appeared in \cite{masterthesis12,Walter14}.

\section{Recoupling coefficients}
\label{sec:recoupling}
The finite-dimensional irreducible representations of the symmetric group $S_k$ are labelled by \emph{Young diagrams}, that is, ordered partitions $\lambda_1 \geq \ldots \geq \lambda_l > 0$ of $\sum_i \lambda_i = k$.
We may think of $k$ as the number of boxes of the Young diagram and of $l$ as the number of rows.
We write $\lambda \vdash k$ for such a partition, $[\lambda]$ for the associated irreducible unitary representation of $S_k$, and denote the dimension of the latter by $d_\lambda := \dim [\lambda]$.
Any finite-dimensional representation $V$ of $S_k$ can be decomposed into a direct sum of irreducible representations, and if $V$ is a unitary representation then this decompositoni can also be made unitary.
Concretely, consider the space of $S_k$-linear maps $\Ha^V_\lambda := \Hom_{S_k}([\lambda], V)$ and equip $\Ha^V_\lambda$ with the re-scaled Hilbert-Schmidt inner product $\braket{\psi, \phi}_\lambda := \tr \psi^\dagger \phi / d_\lambda$.
Then the canonical maps $\Phi^V_\lambda \colon [\lambda] \otimes \Ha^V_{\lambda} \rightarrow V, v \otimes \phi \mapsto \phi(v)$ are $S_k$-linear isometries that can be assembled to a unitary isomorphism $\bigoplus_{\lambda \vdash k} [\lambda] \otimes \Ha^V_{\lambda} \cong V$.

In particular, we may decompose a tensor product $[\alpha] \otimes [\beta]$ of two irreducible representations.
This results in the so-called \emph{Clebsch-Gordan isomorphism} of the symmetric group $S_k$,
\begin{equation}
\label{clebsch gordan iso}
	\bigoplus_{\lambda \vdash k} [\lambda] \otimes \Ha^{\alpha \beta}_{\lambda} \rightarrow [\alpha] \otimes [\beta].
\end{equation}
Its components are $S_k$-linear isometries and will be denoted by
\begin{equation}
  \label{clebsch gordan}
  \CG^{\alpha\beta}_\lambda \colon [\lambda] \otimes \Ha^{\alpha \beta}_{\lambda} \rightarrow [\alpha] \otimes [\beta].
\end{equation}
The dimension of $\Ha^{\alpha\beta}_\lambda$ is known as the \emph{Kronecker coefficient} $g_{\alpha\beta\gamma}$; it is fully symmetric in its three indices since the representations of $S_k$ are self-dual.

We now consider a triple tensor product. Since the tensor product is associative, we have
\[ \bigl( [\alpha] \otimes [\beta] \bigr) \otimes [\gamma] \cong [\alpha] \otimes \bigl( [\beta] \otimes [\gamma] \bigr). \]
Decomposing accordingly using \cref{clebsch gordan iso}, we obtain an isomorphism
\begin{equation}
  \label{6j decomposition}
  \bigoplus_{\lambda \vdash k} [\lambda] \otimes \bigoplus_{\mu \vdash k} \Ha_{\lambda}^{\mu \gamma} \otimes \Ha_{\mu}^{\alpha \beta}
  \cong
  \bigoplus_{\lambda \vdash k} [\lambda] \otimes \bigoplus_{\nu \vdash k} \Ha_{\lambda}^{\alpha \nu} \otimes \Ha_{\nu}^{\beta \gamma}.
\end{equation}
By Schur's lemma, this allows us to identify the multiplicity spaces for each fixed $\lambda$,
\begin{equation}
\label{6j decomposition for fixed lambda}
	\bigoplus_\mu \Ha_{\lambda}^{\mu \gamma} \otimes \Ha_{\mu}^{\alpha \beta} \cong \bigoplus_\nu \Ha_{\lambda}^{\alpha \nu} \otimes \Ha_{\nu}^{\beta \gamma}.
\end{equation}

\begin{dfn}[Recoupling coefficients]
  \label{def 6j definition}
	The \emph{recoupling coefficients} of the symmetric group are defined to be the components of the isomorphism~\eqref{6j decomposition for fixed lambda} for fixed $\mu$ and $\nu$, denoted by
\begin{equation}
\label{6j definition}
  \sixj\alpha\beta\gamma\mu\nu\lambda \colon
  \Ha_{\lambda}^{\mu \gamma} \otimes \Ha_{\mu}^ {\alpha \beta} \rightarrow \Ha_{\lambda}^{\alpha \nu} \otimes \Ha_{\nu}^{\beta \gamma}.
\end{equation}
\end{dfn}

In other words, they are defined by the relation
\begin{equation}
\label{6j via clebsch gordan}
  \id_{\lambda} \otimes \sixj \alpha \beta \gamma \mu \nu \lambda
  =
  \left(
  \left( \id_{\alpha} \otimes \CG^{\beta\gamma}_\nu \right)
  \CG^{\alpha\nu}_\lambda
  \right)^\dagger
  \left( \CG^{\alpha\beta}_\mu \otimes \id_{\gamma} \right)
  \CG^{\mu\gamma}_\lambda,
\end{equation}
in terms of the Clebsch-Gordan maps defined in \cref{clebsch gordan}; $\id_\lambda$ denotes the identity map on the representation $[\lambda]$.

In contrast to the case of $\SU(2)$, these recoupling coefficients are linear maps rather than scalars, since the ``Clebsch-Gordan series'' for $S_k$ is \emph{not} multiplicity-free in general.
Their size is thus best measured by an operator norm, and it will be convenient to employ the \emph{Hilbert-Schmidt norm}, $\normHS{X}^2 := \tr X^\dagger X$ as well as the \emph{operator norm} $\norm{X}_\infty := \sup_{\norm \psi = 1} \norm{X \psi}$.

Since the dimension of the multiplicity spaces $\Ha^{\alpha\beta}_\lambda$, i.e., the Kronecker coefficients $g_{\alpha\beta\gamma}$, are at most $\poly(k)$,%
\footnote{Here and throughout this text, $\poly(k)$ denotes polynomials in $k$ that depend only on the maximal number of rows of the Young diagrams. Equivalently, they depend only on $a$, $b$ and $c$---the dimensions of the local Hilbert spaces $\CC^a$, $\CC^b$ and $\CC^c$ in \cref{tripartite schur weyl}.} both norms are in fact polynomially equivalent:
\begin{equation}
  \label{6j lower upper}
  \norm{\begin{bmatrix} \alpha & \beta & \mu \\ \gamma & \lambda & \nu \end{bmatrix}}_\infty \\
  \leq \normHS{\begin{bmatrix} \alpha & \beta & \mu \\ \gamma & \lambda & \nu \end{bmatrix}}\\
  \leq \poly(k) \norm{\begin{bmatrix} \alpha & \beta & \mu \\ \gamma & \lambda & \nu \end{bmatrix}}_\infty.
\end{equation}

\subsection*{Schur-Weyl duality}

As a first step towards connecting to the properties of quantum states, we now consider
the vector space $V = \left( \CC^d \right)^{\otimes k}$.
The symmetric group $S_k$ acts on $V$ by permuting the tensor factors and the special unitary group $\SU(d)$ acts diagonally; both actions commute.
The corresponding decomposition into irreducible $S_k \times \SU(d)$-representations is known as \emph{Schur-Weyl duality} (e.g.~\cite{goodmanwallach09}),
\begin{equation}
\label{Schur-Weyl duality}
  \left( \CC^d \right)^{\otimes k} \cong
  \; \smashoperator{\bigoplus_{\lambda \vdash_d k}} \; [\lambda] \otimes V^d_\lambda
\end{equation}
Here, $V^d_\lambda$ denotes the irreducible $\SU(d)$-representation with highest weight $\lambda$, and the direct sum runs over all Young diagrams $\lambda$ with $k$ boxes and no more than $d$ rows.
In the following we denote by $P^d_\lambda$ the orthogonal projector onto a direct summand $[\lambda] \otimes V^d_\lambda$ in~\eqref{Schur-Weyl duality}.

We now consider $\CC^{abc} \cong \CC^a \otimes \CC^b \otimes \CC^c$. Applying Schur-Weyl duality separately to the $k$-th tensor powers of $\CC^a$, $\CC^b$ and $\CC^c$, respectively, and applying \cref{6j decomposition}, we find that
\begin{equation}
  \label{tripartite schur weyl}
  \begin{aligned}
  \left( \CC^{abc} \right)^{\otimes k} &\cong
  \; \smashoperator{\bigoplus_{\alpha \vdash_a k, \beta \vdash_b k, \gamma \vdash_c k}} \; [\alpha] \otimes [\beta] \otimes [\gamma] \otimes V^a_\alpha \otimes V^b_\beta \otimes V^c_\gamma \\
  &\cong
  \;\smashoperator{\bigoplus_{\alpha \vdash_a k, \dots, \lambda \vdash_{abc} k}}\;
  [\lambda] \otimes \Ha^{\mu \gamma}_{\lambda} \otimes \Ha^{\alpha \beta}_{\mu} \otimes V^a_\alpha \otimes V^b_\beta \otimes V^c_\gamma\\
  &\cong
  \;\smashoperator{\bigoplus_{\alpha \vdash_a k, \dots, \lambda \vdash_{abc} k}}\;
  [\lambda] \otimes \Ha_{\lambda}^{\alpha \nu} \otimes \Ha_{\nu}^{\beta \gamma} \otimes V^a_\alpha \otimes V^b_\beta \otimes V^c_\gamma.
  \end{aligned}
\end{equation}
We denote by
\begin{equation}
\label{eq:PQ definition}
\begin{aligned}
	P &:= P^{\alpha\beta\gamma}_{\lambda,\mu} := (P_\alpha^a \otimes P_\beta^b \otimes P_\gamma^c) (P_\mu^{ab} \otimes P_\gamma^c) P_\lambda^{abc} \\
	Q &:= Q^{\alpha\beta\gamma}_{\lambda,\nu} := (P_\alpha^a \otimes P_\beta^b \otimes P_\gamma^c) (P_\alpha^a \otimes P_\nu^{bc}) P_\lambda^{abc}
\end{aligned}
\end{equation}
the orthogonal projectors onto the corresponding direct summands in the second and third line of \cref{tripartite schur weyl}, respectively.
Observe that each is defined as a product of commuting projectors.
The following lemma connects the operator norm of their product to the operator norm of the corresponding recoupling coefficient.

\begin{lem}
  \label{lem PQ vs 6j}
 Then
\begin{equation}
  \label{PQ vs 6j}
  \norm{P Q}_\infty =
  \norm{Q P}_\infty =
  \norm{\sixj \alpha \beta \gamma \mu \nu \lambda}_\infty .
\end{equation}
\end{lem}

\begin{proof}
By \cref{6j via clebsch gordan}, the recoupling coefficients satisfy the identity
\begin{equation*}
  \id_{\lambda} \otimes \sixj \alpha \beta \gamma \mu \nu \lambda = F^\dagger E,
\end{equation*}
where $E := \left( \CG^{\alpha\beta}_\mu \otimes \id_{\gamma} \right) \CG^{\mu\gamma}_\lambda$ and $F := (\id_\alpha \otimes \CG^{\beta\gamma}_\nu) \CG^{\alpha\nu}_\lambda$ are compositions of Clebsch-Gordan isometries as defined in \cref{clebsch gordan}.
In particular,
\begin{equation*}
  \norm{\sixj \alpha \beta \gamma \mu \nu \lambda}_\infty =
  \norm{F^\dagger E}_\infty =
  \norm{F F^\dagger E E^\dagger}_\infty.
\end{equation*}
The last equality holds because both $E$ and $F$ are isometries.

Now note that $E E^\dagger$ is precisely equal to the orthogonal projector onto
\begin{equation*}
  [\lambda] \otimes \Ha_\lambda^{\mu\gamma} \otimes \Ha_\mu^{\alpha\beta} \subseteq [\alpha] \otimes [\beta] \otimes [\gamma].
\end{equation*}
On the other hand, $P$ as defined in \cref{eq:PQ definition} is the projector onto
\begin{align*}
  &[\lambda] \otimes \Ha_\lambda^{\mu\gamma} \otimes \Ha_\mu^{\alpha\beta} \otimes V^a_\alpha \otimes V^b_\beta \otimes V^c_\gamma \\
  \subseteq
  &[\alpha] \otimes [\beta] \otimes [\gamma] \otimes V^a_\alpha \otimes V^b_\beta \otimes V^c_\gamma
  \subseteq
  \left( \mathbb C^{abc} \right)^{\otimes k}.
\end{align*}
It follows that $P = P = E E^\dagger \otimes \id_{V^a_\alpha \otimes V^b_\beta \otimes V^c_\gamma}$, where we slightly abuse notation by considering the right-hand side as an opreator on $(\CC^{abc})^{\otimes k}$.
Likewise, we find that $Q = F F^\dagger \otimes \id_{V^a_\alpha \otimes V^b_\beta \otimes V^c_\gamma}$. Together we conclude that
\begin{equation*}
  \norm{F F^\dagger E E^\dagger}_\infty = \norm{Q P}_\infty. \qedhere
\end{equation*}
\end{proof}




Schur-Weyl duality also leads to an alternative definition of the recoupling coefficients in terms of unitary groups (see discussion at the end of \cref{sec:tripartite}).

\section{Recoupling coefficients and tripartite quantum marginals}
\label{sec:tripartite}

Let $\rho_{ABC}$ be a \emph{density operator}, i.e., a positive semidefinite operator of trace one, on $\CC^a \otimes \CC^b \otimes \CC^c$.
We consider the reduced density operators $\rho_{AB}=\tr_{C} (\rho_{ABC})$, $\rho_A= \tr_{BC} (\rho_{ABC})$ etc.,%
\footnote{Given a density operator $\rho_{12}$ on a tensor product Hilbert space $\mathcal H_1 \otimes \mathcal H_2$, the reduced density operator $\rho_1 = \tr_2 (\rho_{12})$ is uniquely defined by the property that $\tr \rho_1 X_1 = \tr \rho_{12} (X_1 \otimes \id_{\mathcal H_2})$ for all operators $X_1$ on $\mathcal H_1$.}
and denote by $r_{ABC}$, $r_{AB}$, $r_A$, etc., the corresponding vectors of eigenvalues (each ordered non-increasingly, e.g.~$r_{ABC,1} \geq r_{ABC,2} \geq \ldots$).
Let us also define the normalisation of a Young diagram $\lambda \vdash k$ by $\bar\lambda := \lambda / k$.
Then we have the following theorem:

\begin{thm}
  \label{main result}
  If there exists a quantum state $\rho_{ABC}$ with eigenvalues $r_A$, $r_B$, $r_C$, $r_{AB}$, $r_{BC}$, $r_{ABC}$ then there exist Young diagrams $\alpha, \beta, \gamma, \mu, \nu, \lambda \vdash k$ with $k \rightarrow \infty$ boxes and at most $a$, $b$, etc.\ rows such that
    \begin{align}\label{assumption-Conv}
      \lim_{k \rightarrow \infty}(\bar\alpha, \bar\beta, \bar\gamma, \bar\mu, \bar\nu, \bar\lambda)= (r_A, r_B, r_C, r_{AB}, r_{BC}, r_{ABC})
    \end{align}
    and
    \begin{equation} \label{assumptionHS}
      \normHS{\begin{bmatrix} \alpha & \beta & \mu \\ \gamma & \lambda & \nu \end{bmatrix}}
      \geq \frac 1 {\mathrm{poly}(k)}.
    \end{equation}
  Conversely, if $(r_A, r_B, r_C, r_{AB}, r_{BC}, r_{ABC})$ is not associated to any tripartite density operator then for every sequence of Young diagrams satisfying \cref{assumption-Conv} we have
  \begin{equation}
    \label{converse}
    \normHS{\begin{bmatrix} \alpha & \beta & \mu \\ \gamma & \lambda & \nu \end{bmatrix}} \leq \exp\bigl(-\Omega(k)\bigr).
  \end{equation}
\end{thm}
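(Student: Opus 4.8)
The plan is to exploit the dictionary between the recoupling coefficients and the projectors $\widetilde P$, $\widetilde Q$ established in \eqref{tilde in terms of 6j} and \eqref{schur weyl 6j lower}, so that the Hilbert--Schmidt norm of the $6j$-symbol is controlled, up to $\poly(k)$, by $\norm{\widetilde P\widetilde Q}_\infty$. The latter quantity measures the angle between two subspaces of $(\CC^{abc})^{\otimes k}$: the range of $\widetilde Q$ (states that, under Schur--Weyl, have reductions on $AB$ and on $A,B,C$ living in the isotypic components $\mu$ and $\alpha,\beta,\gamma$ respectively, sitting inside $\lambda$) and the range of $\widetilde P$ (the analogous statement with $\nu$ on $BC$ in place of $\mu$ on $AB$). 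So $\norm{\widetilde P\widetilde Q}_\infty$ is large precisely when there is a vector $\ket\psi$ in $(\CC^{abc})^{\otimes k}$ that is \emph{simultaneously} (approximately) supported on all six isotypic components $\alpha,\beta,\gamma,\mu,\nu,\lambda$.

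For the forward direction I would start from a state $\rho_{ABC}$ with the prescribed eigenvalues and build, following the purification/typical-projector technique of Keyl--Werner and Christandl--Mitchison \cite{keylwerner01, christandlmitchison06}, a sequence of vectors $\ket{\psi_k}\in(\CC^{abc})^{\otimes k}$ whose single-copy reductions on $A$, $B$, $C$, $AB$, $BC$ and on the whole system are close to $\rho_A^{\otimes k}$, $\rho_B^{\otimes k}$, etc. Concretely, take a purification, restrict each party's copy to the $\lambda$-isotypic subspace for the appropriate Young diagram (this costs only a $1/\poly(k)$ factor in norm, by the standard estimate that $P_\lambda$ has weight $\geq 1/\poly(k)$ on an i.i.d.\ state whose spectrum is $\overline\lambda$), and check that the resulting vector has nonnegligible overlap with both $\ran\widetilde Q$ and $\ran\widetilde P$. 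Choosing $\mu$ (resp.\ $\nu$) to be a typical Young diagram for $\rho_{AB}$ (resp.\ $\rho_{BC}$) makes $\widetilde Q\ket{\psi_k}$ and $\widetilde P\ket{\psi_k}$ both of norm $\geq 1/\poly(k)$, and since they are obtained from the same $\ket{\psi_k}$ one gets $\norm{\widetilde P\widetilde Q}_\infty\geq 1/\poly(k)$, hence \eqref{assumptionHS} via \eqref{schur weyl 6j lower}. The convergence \eqref{assumption-Conv} is automatic from the law of large numbers controlling which Young diagrams are typical.

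For the converse, suppose $\norm{\widetilde P\widetilde Q}_\infty\geq\exp(-o(k))$ along a sequence satisfying \eqref{assumption-Conv}; I want to manufacture a limiting tripartite state with the claimed eigenvalues and thereby contradict the hypothesis. From $\norm{\widetilde P\widetilde Q}_\infty$ being subexponential, extract for each $k$ a unit vector $\ket{\psi_k}$ that is, up to $\exp(-o(k))$, in the intersection of the six isotypic subspaces. Form the state $\rho_{ABC}^{(k)}$ obtained by tracing out all but one tensor copy from $\ketbra{\psi_k}{\psi_k}$. The key point is that membership in the $\alpha$-isotypic component of $(\CC^a)^{\otimes k}$ forces, by the Keyl--Werner spectrum estimation theorem, the single-copy reduced state on $A$ to have spectrum exponentially close to $\overline\alpha\to r_A$; likewise for $B,C,AB,BC$ and the global state. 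Passing to a subsequential limit $\rho_{ABC}$ of the $\rho^{(k)}_{ABC}$ (the set of density operators on the fixed finite-dimensional space $\CC^{abc}$ is compact) yields a genuine tripartite state whose six marginals have exactly the eigenvalues $(r_A,r_B,r_C,r_{AB},r_{BC},r_{ABC})$, contradicting the assumption; hence $\norm{\widetilde P\widetilde Q}_\infty\leq\exp(-\Omega(k))$, and \eqref{converse} follows again from \eqref{schur weyl 6j lower}.

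The main obstacle, I expect, is the converse direction and specifically the quantitative control needed to go from ``$\ket{\psi_k}$ is approximately in each isotypic component'' to ``its single-copy marginals are approximately $\rho_A$, \dots'' with errors that close up in the limit: one must ensure that the $\exp(-o(k))$ slack in the overlap does not get amplified when passing to reduced states, and that the spectrum-estimation bounds are uniform enough to survive the subsequence limit. A secondary subtlety is bookkeeping the multiplicities: the equivalence in \eqref{schur weyl 6j lower} only holds up to $\poly(k)$, so throughout one must be careful that all ``typicality'' losses are polynomial (forward direction) or subexponential (converse), never worse. Both of these are handled by the now-standard large-deviation estimates for Schur--Weyl duality, so I anticipate the proof is a careful assembly of existing tools rather than requiring a fundamentally new idea.
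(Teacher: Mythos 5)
Your overall frame (control $\normHS{\cdot}$ by $\norm{\widetilde P\widetilde Q}_\infty$ via \eqref{schur weyl 6j lower}, then use spectrum estimation and compactness) matches the paper, but both directions break at their decisive step. In the forward direction, the inference from ``$\norm{\widetilde Q\ket{\psi_k}}\geq 1/\poly(k)$ and $\norm{\widetilde P\ket{\psi_k}}\geq 1/\poly(k)$ for the same vector'' to ``$\norm{\widetilde P\widetilde Q}_\infty\geq 1/\poly(k)$'' is false for non-commuting projectors: two projectors with orthogonal ranges, each at forty-five degrees to $\ket{\psi}$, give overlaps $1/2$ while $\widetilde P\widetilde Q=0$. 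An inequality of the form $\abs{\tr(\widetilde P\widetilde Q\sigma)}\geq\tr(\widetilde P\sigma)-\sqrt{\tr(\overline{\widetilde Q}\sigma)}$ is only useful when the individual overlaps are close to $1$, not merely inverse-polynomial, which is why the paper does not fix a single typical $\mu$ and $\nu$: it aggregates over \emph{all} diagrams within $\ell_1$-distance $\delta$ of the target spectra to form $\widetilde P_\delta,\widetilde Q_\delta$, for which \eqref{keyl werner} gives weight $1-\varepsilon$ on $\rho_{ABC}^{\otimes k}$ with $\varepsilon$ exponentially small, deduces $\abs{\tr(\widetilde P_\delta\widetilde Q_\delta\rho_{ABC}^{\otimes k})}\geq 1-2\sqrt{\varepsilon}$, and only then pigeonholes over the $\poly(k)$ many sextuples of diagrams to extract one satisfying \eqref{assumptionHS}. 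Your proposal is missing this aggregation-plus-pigeonhole mechanism.

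The converse has a more serious gap: the claimed ``key point'' that membership of $\ket{\psi_k}$ in the $\alpha$-isotypic component of $(\CC^a)^{\otimes k}$ forces its single-copy marginal on $A$ to have spectrum close to $\overline\alpha$ is simply not what Keyl--Werner says, and it is false. Spectrum estimation bounds the weight of isotypic projectors on \emph{i.i.d.} states $\rho^{\otimes k}$; an arbitrary vector in an isotypic subspace can have a completely different one-copy marginal (a Dicke state with $k/2$ excitations lies in the symmetric subspace of $(\CC^2)^{\otimes k}$, i.e.\ $\lambda=(k)$, yet its single-copy marginal is maximally mixed, far from $\overline\lambda=(1,0)$). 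The missing idea---the heart of the paper's converse---is to use the $S_k$-linearity of $\widetilde P$ and $\widetilde Q$ to take the state optimizing $\norm{\widetilde P\widetilde Q}_\infty^2$ permutation-invariant and then invoke the de Finetti-type operator bound $\sigma_{(ABC)^k}\leq\poly(k)\int d\rho_{ABC}\,\rho_{ABC}^{\otimes k}$ of Christandl--K\"onig--Renner and Hayashi; this reduces everything to i.i.d.\ states, and after a Cauchy--Schwarz step separating $\widetilde P$ from $\widetilde Q$ one can apply \eqref{keyl werner} to $\rho_{ABC,k}^{\otimes k}$ and conclude by compactness. Without that ingredient, your subsequential limit $\rho_{ABC}$ has no reason to carry the prescribed marginal spectra, so the contradiction never materializes.
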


\begin{proof}
For both directions of the proof we use the spectrum estimation theorem \cite{keylwerner01} (cf.~\cite{AlRuSa87, HayMat02, christandlmitchison06}), which says that $k$ copies of a density operator $\rho$ on some $\CC^d$ are mostly supported on the subspaces $[\lambda] \otimes V^d_\lambda$ satisfying $\bar\lambda = \lambda/k \approx \spec \rho = r$. More precisely,
\begin{equation}
\label{keyl werner}
\tr(P_\lambda^d \rho^{\otimes k}) \leq \poly(k) \exp\bigl(-k \norm{\bar{\lambda} - r}_1^2 / 2\bigr)
\end{equation}
where $\norm{x}_1= \sum_i \abs{x_i}$ is the $\ell_1$-norm.

We start with the proof of the ``if'' statement.
Define $\tilde P$ as the sum of the projectors $P_{\alpha\beta\gamma}^{\lambda,\mu}$ -- definend as in~\eqref{eq:PQ definition} -- for which $\norm{\bar{\alpha}- r_A}_1 \leq \delta$, $\norm{\bar{\beta} - r_B}_1 \leq \delta$, etc.; $\tilde Q$ is defined accordingly.
By~\eqref{keyl werner} and the fact that there are only $\poly(k)$ many Young diagrams 
\begin{equation*}
\tr(\tilde P \rho^{\otimes k}_{ABC}) \geq 1 - \varepsilon, \quad \tr(\tilde Q \rho^{\otimes k}_{ABC}) \geq 1 - \varepsilon,
\end{equation*}
where $\varepsilon = \poly(k) \exp(-k \delta^2 / 2)$.
Now we use
\begin{equation*}
  \abs{\tr(\tilde P \tilde Q \sigma)} \geq \tr(\tilde P \sigma) - \sqrt{1 - \tr(\tilde Q \sigma)},
\end{equation*}
which holds for arbitrary projectors $\tilde P$, $\tilde Q$ and quantum states $\sigma$\footnote{For pure states $\sigma = \ket\phi\!\bra\phi$, $\tr(PQ \sigma)= \braket{\phi | PQ | \phi} \geq \braket{\phi | P |\phi} - \abs{\braket{\phi | P (1 - Q) | \phi}} \geq \braket{\phi | P | \phi} - \norm{(1 - Q) \ket\phi} = \tr(P\sigma) - \sqrt{1 - \tr(Q\sigma)}$ by the Cauchy-Schwarz inequality; the general statement follows by considering a purification of $\sigma$.} and obtain
\begin{equation*}
  \norm{\tilde P \tilde Q}_\infty \geq
  \abs{\tr(\tilde P \tilde Q \rho_{ABC}^{\otimes k})} \geq 1 - 2 \sqrt{\varepsilon}.
\end{equation*}
Using \cref{lem PQ vs 6j}, \cref{6j lower upper} and the triangle inequality, we find that
\begin{equation*}
\displaystyle\sum_{\alpha, \beta, \gamma, \mu, \nu, \lambda}\normHS{\begin{bmatrix} \alpha & \beta & \mu \\ \gamma & \lambda & \nu \end{bmatrix}} \geq 1- 2 \sqrt\varepsilon,
\end{equation*}
where the sum extends over Young diagrams with $k$ boxes and at most $a$, $b$, etc.\ rows, whose normalisation is close to the eigenvalues associated to $\rho_{ABC}$ as specified above. Since the number of terms in the sum is again upper-bounded by $\poly(k)$, we can find sequences of Young diagrams satisfying \cref{assumption-Conv} and \cref{assumptionHS}.

We now prove the converse statement.
For this, we consider a sequence of Young diagrams \cref{assumption-Conv} whose limit $(r_A, r_B, \ldots)$ is \emph{not} associated to any tripartite density operator on $\CC^{abc}$.
For each $k$, we choose quantum states $\sigma_{(ABC)^k}$ such that
\[
  \norm{P Q}_\infty^2
  =
  \norm{(QP)(PQ)}_\infty
  =
  \tr(P Q \sigma_{(ABC)^k} Q P ).
\]
Both projectors $P$ and $Q$ commute with the action of $S_k$ on $(\CC^{abc})^{\otimes k}$, so we may assume that the same is true for $\sigma_{(ABC)^k}$.
Then we can use the bound
\begin{equation}
\label{eq:postsel}
  \sigma_{(ABC)^k}\leq \mathrm{poly}(k) \int d\rho_{ABC} \, \rho_{ABC}^{\otimes k},
\end{equation}
where $d\rho_{ABC}$ is the Hilbert-Schmidt probability measure on the set of density operators on $\CC^{abc}$
(see, e.g., \cite[Lemma 1]{christandl_renner_2012_reliable}). 
The right-hand side of~\eqref{eq:postsel} commutes with the action of $S_k$ as well as with unitaries of the form $U^{\otimes k}$, $U \in \SU(abc)$.
In view of Schur-Weyl duality~\eqref{Schur-Weyl duality}, Schur's lemma implies that it is a lienar combination of the isotypical projectors $P^{abc}_\lambda$ and therefore commutes with both $P$ and $Q$.
It follows that 
\begin{align*}
	\norm{PQ}_\infty^2 &\leq \mathrm{poly}(k) \tr(P Q \int d\rho_{ABC} \, \rho_{ABC}^{\otimes k} Q P) \\
	&= \mathrm{poly}(k) \tr(P \int d\rho_{ABC} \, \rho_{ABC}^{\otimes k} Q) \\
	&= \mathrm{poly}(k) \int d\rho_{ABC} \, \tr(P \rho_{ABC}^{\otimes k} Q)
\end{align*}
by linearity and cyclicity of the trace.
Since $d\rho_{ABC}$ is a probability measure, it follows, together with \cref{6j lower upper,PQ vs 6j}, that for each $k$ there exists at least one quantum state $\rho_{ABC, k}$ on $\CC^{abc}$ such that
\begin{align*}
  \normHS{\begin{bmatrix} \alpha & \beta & \mu \\ \gamma & \lambda & \nu \end{bmatrix}}^2
  \leq
  \mathrm{poly}(k) \tr(P \rho_{ABC, k}^{\otimes k} Q).
\end{align*}
Using the H\"older inequality, the right-hand side can be upper-bounded by the square roots of each of the six traces $\tr(P_\alpha^a \rho_A^{\otimes k})$, $\tr(P_\beta^b \rho_B^{\otimes k})$, etc., which in turn can be upper-bounded via \cref{keyl werner}. Thus we find
\begin{equation*}
  \normHS{\begin{bmatrix} \alpha & \beta & \mu \\ \gamma & \lambda & \nu \end{bmatrix}}^2 \leq \poly(k) \exp(-k D^2 /4),
\end{equation*}
where $D := \min_{(s_A, s_B, \dots)} \max \{ \norm{\bar\alpha - s_A}_1, \norm{\bar\beta - s_B}_1, \dots \}$ quantifies the distance of $(r_A, r_B, \ldots)$ to the closed set of spectra $(s_A, s_B, \ldots)$ that correspond to tripartite quantum states on $\CC^{abc}$.
By assumption, $D > 0$, so we get the exponential decay asserted in the theorem.
\end{proof}

\Cref{main result} can be generalized to more than three parties by considering the following quantity: as in \cref{6j decomposition}, successively decompose a tensor product of irreducible representations in two inequivalent ways; the corresponding ``generalized recoupling coefficients'' then are the components of the resulting isomorphism for fixed intermediate labels $\mu_j$ and $\nu_k$, and an analogous result can be established for these coefficients, which are symmetric group counterparts of Wigner's $3nj$-symbols for $\SU(2)$.
Just as \cref{main result} does not cover the eigenvalues of $\rho_{AC}$, in general only of a linear number of the exponentially many reduced density operators can be controlled in this fashion (e.g., the nearest-neighbor reduced states in a linear chain of particles).
What the representation-theoretic quantities involved in controlling all marginal spectra should be is an intriguing question, with possible ramifications to the search for new entropy inequalities of the von Neumann entropy, as we detail in the following sections.

\medskip

For pure quantum states $\rho_{ABC}$, the Schmidt decomposition implies that necessarily $r_{AB} = r_C$ and $r_A = r_{BC}$.
Therefore, we can discard of the two-body spectra, and the problem reduces to a one-body quantum marginal problem.
On the level of representation theory, it suffices to consider single-row Young diagrams $\lambda = (k)$, corresponding to the trivial representation of $S_k$; hence, $\mu = \gamma$ and $\alpha = \nu$, and it can be shown easily that
\[ \normHS{\sixj \alpha \beta \gamma \mu \nu \lambda}^2 = \dim \Ha^{\alpha\beta}_\gamma = g_{\alpha\beta\gamma} \]
likewise reduces to a Kronecker coefficient of the symmetric group.

In this way, \cref{main result} specializes to the well-known relationship between the pure-state one-body quantum marginal problem and the asymptotics of the decomposition of tensor products of irreducible representations of the symmetric group~\cite{christandlmitchison06, klyachko04, christandlharrowmitchison07} (it also shows that recoupling coefficients can grow with $k$).
\Cref{main result} generalizes this relationship:
It shows that the overlap between two such decompositions -- as captured by the recoupling coefficients -- similarly characterizes the quantum marginal problem with two overlapping marginals.
It would be of great interest to find a geometric explanation of this result in the framework of geometric invariant theory, which might also lead to a more refined understanding of the asymptotics along the lines of~\cite{roberts1999} for Wigner's $6j$-symbols.
Mathematically, this is related to the ``intersection'' of moment maps, or to simultaneous Hamiltonian reduction for non-commuting group actions.

\medskip

We conclude with some remarks on the interpretation of \cref{main result} as a semiclassical limit.
In \cite{wigner59}, Wigner studied the asymptotics of the recoupling coefficients of $\SU(2)$ which can be defined in complete analogy to \cref{def 6j definition}. Given three particles of spin $j_A$, $j_B$, $j_C$ such that the total spin of the first two particles is $j_{AB}$ and of all three particles $j_{ABC}$, the absolute value squared of the $\SU(2)$ recoupling coefficient can be interpreted as the probability of observing that particles two and three have total spin $j_{BC}$. In the semiclassical limit of simultaneously large spins, Wigner showed that this probability oscillates around the inverse volume of the tetrahedron whose edges have length equal to the six spins---if such a tetrahedron exists.
In particular, it then decays polynomially with $j$.
If no such tetrahedron exists then the recoupling coefficient decays exponentially.
This result is understood to mean that ``classical'' configurations are exponentially more likely than all others in the limit of large quantum numbers.
A more precise formula has been given by Ponzano and Regge \cite{PonzanoRegge} and only fully proved in \cite{roberts1999}.

The recoupling coefficients for the symmetric groups $S_k$ that we consider in this chapter can also be defined in terms of unitary groups. This follows from Schur--Weyl duality \eqref{Schur-Weyl duality}, which implies that the projectors $P^a_\alpha$, $P^b_\beta$, etc.\ in \eqref{eq:PQ definition} can be equivalently defined as the isotypical projectors for the unitary groups $\SU(a)$, $\SU(b)$, etc.
From this perspective, the decompositions in \eqref{tripartite schur weyl} arise by restricting the $\SU(abc)$-representation $(\CC^{abc})^{\otimes k}$ to $\SU(a) \times \SU(b) \times \SU(c)$ via either of the ``intermediate subgroups'' $\SU(ab) \times \SU(c)$ or $\SU(a) \times \SU(bc)$.
Thus, it is suggestive to consider the number of boxes $k$ in the Young diagrams as the semiclassical parameter in our setup. In this sense, tripartite quantum states $\rho_{ABC}$ are the formal analogues of Wigner's tetrahedra---they are the geometric objects that describe the ``classical'' configurations, corresponding to polynomial decay in the limit $k \rightarrow \infty$.


\section{Symmetries of the recoupling coefficients}

In this section we rewrite the Hilbert-Schmidt norm of the recoupling coefficients in a way that makes manifest its symmetries.
Both the strong subadditivity property of the von Neumann entropy as well as its weak monotonicity property can then be understood in terms of these symmetries and \cref{main result} (see \cref{sec:ssa}).
We first give a diagrammatic argument using the graphical calculus for symmetric monoidal categories~\cite{Coecke10,Selinger11,Turaev10}.
An alternative, purely algebraic proof, is postponed to the end of this section.

Recall from \cref{sec:recoupling} that the multiplicity spaces $\Ha^{\alpha\beta}_\lambda$ are given by the space of $S_k$-linear maps from $[\lambda]$ to $[\alpha] \otimes [\beta]$.
In each multiplicity space, let us choose maps $\Phi^{\alpha\beta}_{\lambda,i}$ that form an orthonormal basis with respect to the inner product $\braket{\psi, \phi}_\lambda = \tr \psi^\dagger \phi / d_\lambda$ introduced in \cref{sec:recoupling}.
We will represent these maps graphically by
\begin{equation*}
  \Phi^{\alpha\beta}_{\lambda,i} = \strongsixjCG{$\lambda$}{$\alpha$}{$\beta$}{$i$}
  \quad \text{and} \quad
  (\Phi^{\alpha\beta}_{\lambda,i})^{\dagger}= \strongsixjCGadjoint{$\lambda$}{$\alpha$}{$\beta$}{$i$}
\end{equation*}
in the graphical calculus.
The maps $\Phi^{\alpha\beta}_{\lambda,i}$ are nothing but components of the Clebsch--Gordan isometries $\Phi_\lambda^{\alpha\beta}$ defined in \cref{clebsch gordan}.
We thus obtain the following graphical expression for the matrix elements of the recoupling coefficients with respect to the bases fixed above from \eqref{6j via clebsch gordan} by taking a trace over $[\lambda]$:

\begin{equation}
\label{6j graphical expression}
  \begin{bmatrix} \alpha & \beta & \mu \\ \gamma & \lambda & \nu \end{bmatrix}^{kl}_{ij}
  =
  \dfrac{1}{d_\lambda} \strongsixjAsymmetric
\end{equation}

Our goal is to transform the right-hand side expression in \eqref{6j graphical expression} into a form that renders its symmetries apparent.
For this, we recall that the irreducible representations of the symmetric group are self-dual, i.e., $[\lambda] \cong [\lambda]^*$, because they can be defined over the reals.
It follows that there exists a single copy of the trivial representation $\mathbf{1}$ in each tensor product $[\lambda] \otimes [\lambda]$, i.e., $\Ha^{\lambda=\lambda}_{\mathbf 1}$ is one-dimensional.
We shall denote the corresponding basis vector by
\begin{equation}
\label{Special CG}
  \begin{tikzpicture}[baseline=-0.1cm]
    \draw[->-=.6] (-0.1,0) -- (-0.8, 0);
    \draw[->-=.6] (0.1,0) -- (0.8,0);
    \draw[epr] (0,0) circle[radius=0.1];
    \draw (-0.9,0) node[left] {$\lambda$};
    \draw (0.9,0) node[right] {$\lambda$};
  \end{tikzpicture}
  := \strongsixjCG{$\mathbf{1}$}{$\lambda$}{$\lambda$}{},
\end{equation}
omitting the leg corresponding to the identity object $\mathbf 1$ as is usual in the graphical calculus.
It can be concretely written as a maximally entangled state $\sum_i \ket{\lambda,i} \otimes \ket{\lambda,i} / {\sqrt {d_\lambda}}$ in any real orthonormal basis $\ket{\lambda,i}$ of $[\lambda]$ (i.e., in a basis such that $S_k$ acts by real orthogonal matrices).
We denote the adjoint of~\eqref{Special CG} by reversing arrows.
It is then easy to see that we have the ``teleportation identity''
\begin{equation}
\label{Two bubbles}
  \begin{tikzpicture}[baseline=-0.1cm]
    \draw[->-=.6] (-0.35,0) -- (0.35, 0);
    \draw[-<-=.35] (0.55,0) -- (1.25,0);
    \draw[->-=.6] (-0.55,0) -- (-1.25,0);
    \draw[adjoint epr] (0.45,0) circle[radius=0.1];
    \draw[epr] (-0.45,0) circle[radius=0.1];
    \draw (-1.35,0) node[left] {$\lambda$};
    \draw (1.35,0) node[right] {$\lambda$};
  \end{tikzpicture}
  =
  \frac 1 {d_\lambda}
  \begin{tikzpicture}[baseline=-0.1cm]
    \draw[-<-=.4] (0,0) -- (0.8,0);
    \draw (0,0) node[left] {$\lambda$};
    \draw (0.8,0) node[right] {$\lambda$};
  \end{tikzpicture}
  .
\end{equation}

We can use \eqref{Special CG} and its adjoint to raise and lower indices, i.e., to reverse the direction of arrows. We thus obtain the following important property of the Clebsch--Gordan isometries (cf.\ \cite[(7-205a)]{Hamermesh89}):

\begin{lem}
  \label{lem:invariant bases}
  Both sets
  \begin{equation*}
    \left\{ \strongsixjCGinvariant{$i$} \right\}
    \quad\text{and}\quad
    \left\{
      \sqrt{ \frac {d_\alpha d_\beta} {d_\lambda} }
      \strongsixjCGadjointinvariant{$i$}
    \right\}
  \end{equation*}
  form orthonormal bases of the space $([\alpha] \otimes [\beta] \otimes [\lambda])^{S_k}$ of $S_k$-invariant vectors in the triple tensor product.
\end{lem}
\begin{proof}
  Since the dimensions of $([\alpha] \otimes [\beta] \otimes [\lambda])^{S_k}$ and of $\Ha^{\alpha\beta}_\lambda$ agree by self-duality of $[\lambda]$, it suffices to show that both sets of vectors are orthonormal.
  For the first set, observe that it follows from the teleportation identity \eqref{Two bubbles} that
  \begin{equation*}
    \left\langle \strongsixjCGinvariant{$i$} | \strongsixjCGinvariant{$i'$} \right\rangle
    =
    \begin{tikzpicture}[baseline=0cm]
      \draw[->-=.56] (-0.2,-0.6) -- (-0.2,0.6);
      \draw[->-=.56] (0.2,-0.6) -- (0.2,0.6);

      \draw[->-=.6] (0.8,-0.3) -- (0.8,0.3);
      \draw[->-=0.4] (0.8,0.8) -- (0.8,0.5);
      \draw[-<-=0] (0.8,-0.8) -- (0.8,-0.5);
      \draw[adjoint epr] (0.8,0.4) circle[radius=0.1];
      \draw[epr] (0.8,-0.4) circle[radius=0.1];

      \draw[adjoint] (0, 0.6) circle[radius= 0.2];
      \draw (0, -0.6) circle[radius= 0.2];
      \draw (0, 0.8) arc[radius= 0.4, start angle= 180, end angle= 0];
      \draw (0, -0.8) arc[radius= 0.4, start angle= -180, end angle= 0];
      \draw (0,0.6) node {$i$};
      \draw (0,-0.6) node {$i'$};
      \draw (-0.2, 0) node[left] {$\alpha$};
      \draw (0.2, 0) node[right] {$\beta$};
      \draw (0.2, 0.9) node[right] {$\lambda$};
    \end{tikzpicture}
    =
    \frac 1 {d_\lambda}
    \begin{tikzpicture}[baseline=0cm]
      \draw[->-=.56] (-0.2,-0.6) -- (-0.2,0.6);
      \draw[->-=.56] (0.2,-0.6) -- (0.2,0.6);
      \draw[->-=.56] (0.8,0.8) -- (0.8,-0.8);
      \draw[adjoint] (0, 0.6) circle[radius= 0.2];
      \draw (0, -0.6) circle[radius= 0.2];
      \draw (0, 0.8) arc[radius= 0.4, start angle= 180, end angle= 0];
      \draw (0, -0.8) arc[radius= 0.4, start angle= -180, end angle= 0];
      \draw (0,0.6) node {$i$};
      \draw (0,-0.6) node {$i'$};
      \draw (-0.2, 0) node[left] {$\alpha$};
      \draw (0.2, 0) node[right] {$\beta$};
      \draw (0.2, 0.9) node[right] {$\lambda$};
    \end{tikzpicture}
    \;.
  \end{equation*}
  This is in turn equal to
  \begin{equation*}
    \frac {\tr (\Phi^{\alpha\beta}_{\lambda,i})^\dagger \Phi^{\alpha\beta}_{\lambda,i'}} {d_\lambda}
    = \braket{\Phi^{\alpha\beta}_{\lambda,i}, \Phi^{\alpha\beta}_{\lambda,i'}}_\lambda
    = \delta_{i,i'},
  \end{equation*}
  since the $\Phi^{\alpha\beta}_{\lambda,i}$ form an orthonormal basis.

  For the second set, we find similarly that
  \begin{equation*}
    \left\langle \strongsixjCGadjointinvariant{$i$} | \strongsixjCGadjointinvariant{$i'$} \right\rangle
    =
    \begin{tikzpicture}[baseline=0cm]
      \draw[->-=.56] (0.8,-0.8) -- (0.8,0.8);

      \draw[->-=.85] (0.2, -0.1) -- (0.2,0.1);
      \draw[->-=.9] (0.2,0.6) -- (0.2,0.3);
      \draw[->-=.56] (0.2,-0.3) -- (0.2,-0.6);
      \draw[adjoint epr] (0.2,0.2) circle[radius=0.1];
      \draw[epr] (0.2,-0.2) circle[radius=0.1];

      \draw[->-=.85] (-0.2, -0.1) -- (-0.2,0.1);
      \draw[->-=.9] (-0.2,0.6) -- (-0.2,0.3);
      \draw[->-=.56] (-0.2,-0.3) -- (-0.2,-0.6);
      \draw[adjoint epr] (-0.2,0.2) circle[radius=0.1];
      \draw[epr] (-0.2,-0.2) circle[radius=0.1];

      \draw (0, 0.6) circle[radius= 0.2];
      \draw[adjoint] (0, -0.6) circle[radius= 0.2];
      \draw (0, 0.8) arc[radius= 0.4, start angle= 180, end angle= 0];
      \draw (0, -0.8) arc[radius= 0.4, start angle= -180, end angle= 0];
      \draw (0,0.6) node {$i$};
      \draw (0,-0.6) node {$i'$};
      \draw (-0.2, 0) node[left] {$\alpha$};
      \draw (0.2, 0) node[right] {$\beta$};
      \draw (0.2, 0.9) node[right] {$\lambda$};
    \end{tikzpicture}
    = \frac 1 {d_\alpha d_\beta}
    \begin{tikzpicture}[baseline=0cm]
      \draw[->-=.56] (0.8,-0.8) -- (0.8,0.8);
      \draw[->-=.56] (0.2,0.5) -- (0.2,-0.6);
      \draw[->-=.56] (-0.2,0.5) -- (-0.2,-0.6);
      \draw (0, 0.6) circle[radius= 0.2];
      \draw[adjoint] (0, -0.6) circle[radius= 0.2];
      \draw (0, 0.8) arc[radius= 0.4, start angle= 180, end angle= 0];
      \draw (0, -0.8) arc[radius= 0.4, start angle= -180, end angle= 0];
      \draw (0,0.6) node {$i$};
      \draw (0,-0.6) node {$i'$};
      \draw (-0.2, 0) node[left] {$\alpha$};
      \draw (0.2, 0) node[right] {$\beta$};
      \draw (0.2, 0.9) node[right] {$\lambda$};
    \end{tikzpicture}
    = \frac {d_\lambda} {d_\alpha d_\beta} \delta_{i,i'}.
  \end{equation*}
\end{proof}

We finally introduce the symmetric notation:
\begin{equation}
\label{invariant notation}
  \begin{tikzpicture}[baseline=0cm]
    \draw[->-=.56] (0,-0.2) -- (0,-0.7);
    \draw[->-=.56] (0.177,0.1) -- (0.766, 0.5);
    \draw[->-=.56] (-0.177,0.1) -- (-0.766,0.5);
    \draw (0,0) circle[radius=0.2];
    \draw (0,0) node {$i$};
    \draw (-0.766,0.5) node[above] {$\alpha$};
    \draw (0.766, 0.5) node[above] {$\beta$};
    \draw (0,-0.7) node[below] {$\lambda$};
  \end{tikzpicture}
  :=
  \strongsixjCGinvariant{$i$}
\end{equation}
We note that the vectors~\eqref{invariant notation} depend on the choice of arrow that was reversed.
However, by \cref{lem:invariant bases} any such choice gives rise to unitarily equivalent bases of the space of $S_k$-invariants!
We thus obtain the following result:

\begin{prp}
  \label{prp:invariant object}
  We have
  \begin{equation}
  \label{eq:invariant norm}
    \frac 1 {d_\mu d_\nu} \norm{\sixj \alpha \beta \gamma \mu \nu \lambda}_{\text{HS}}^2
  = d_\alpha d_\beta d_\gamma d_\lambda d_\mu d_\nu \, \norm{\strongsixjSymmetric}^2_2,
  \end{equation}
  where $\norm{(x_{i,j,k,l})}_2 := \sqrt{\sum_{i,j,k,l} \abs{x_{i,j,k,l}}^2}$ denotes the $\ell_2$-norm of a tensor with indices $i,j,k,l$.
  The right-hand side does not depend on the choice of arrow that was reversed in the definition~\eqref{invariant notation}.
\end{prp}
\begin{proof}
  Recall from \eqref{6j graphical expression} that
  \begin{equation*}
      \normHS{\sixj \alpha \beta \gamma \mu \nu \lambda}^2
    = \frac 1 {d_\lambda^2} \norm{\strongsixjAsymmetric}^2_2.
  \end{equation*}
  By inserting the teleportation identity \eqref{Two bubbles} once for each of the six arrows, we obtain
  \begin{equation*}
    \frac 1 {d_\lambda^2} d_\alpha^2 d_\beta^2 d_\gamma^2 d_\mu^2 d_\nu^2 d_\lambda^2 \norm{\strongsixjAsymmetricTeleported}^2_2.
  \end{equation*}
  By first applying the unitary transformation that relates the second orthonormal basis in \cref{lem:invariant bases} to the first (at the vertices \circled{$k$} and \circled{$l$}) and then using definition \eqref{invariant notation} (at all four vertices), this is in turn equal to
  \begin{align*}
    \frac 1 {d_\lambda^2} d_\alpha^2 d_\beta^2 d_\gamma^2 d_\mu^2 d_\nu^2 d_\lambda^2 \frac {d_\lambda} {d_\alpha d_\nu} \frac {d_\nu} {d_\beta d_\gamma} \norm{\strongsixjSymmetric}^2_2,
  \end{align*}
  as the $\ell_2$-norm is unitarily invariant.
  From this we obtain the desired expression by deforming the diagram and simplifying the prefactor.
  The unitary invariance of the $\ell_2$-norm also shows that the expression is insensitive to the choice of which arrow is reversed in \eqref{invariant notation}.
\end{proof}

The right-hand side of \eqref{eq:invariant norm} is the symmetric group analogue of a \emph{Wigner $6j$-symbols}, which can be similarly from the recoupling coefficients of $\SU(2)$.%
\footnote{However, for our purposes it was important to use the recoupling coefficients in \cref{main result}, since the dimensions of irreducible $S_k$-representations grow exponentially with $k$ and thus affect the asymptotics.}
It is immediately apparent from the graphical expression that it has the symmetries of a tetrahedron.
We record the following consequence of this symmetry, which has a well-known counterpart for $\SU(2)$; cf.\ \cite[(B4)]{LevinWen05}:

\begin{cor}[Tetrahedral symmetries]
  \label{cor:symmetry}
  The quantities
  \begin{equation}
    \frac 1 {d_\mu d_\nu} \norm{\sixj \alpha \beta \gamma \mu \nu \lambda}_{\text{HS}}^2
  \end{equation}
  are invariant under exchanging the columns $(\beta,\lambda) \leftrightarrow (\mu,\nu)$ and also under exchanging the columns $(\alpha,\gamma) \leftrightarrow (\mu,\nu)$.
\end{cor}
\begin{proof}
  This is an immediate consequence of \cref{prp:invariant object}, since the right-hand side norm in \eqref{eq:invariant norm} is invariant under reflection of the diagram by the axes through the edges labeled by $\alpha$ and $\beta$, respectively.
\end{proof}

\subsection*{Algebraic proof}

We now give an alternative, algebraic proof of \cref{prp:invariant object} and \cref{cor:symmetry} that follows along the same lines as the graphical proof.
In quantum information theory, \emph{maximally entangled states} on a Hilbert space $\calH \otimes \calH$ are defined by the formula
\begin{equation}
\label{eq:epr}
  \ket{\Psi^+_\calH} := \frac 1 {\sqrt{\dim \mathcal H}} \sum_i \ket i \otimes \ket i
\end{equation}
with respect to an orthonormal basis $\ket i$.
They satisfy the fundamental identity
\begin{equation}
  \label{eq:epr transpose}
  (X \otimes \id) \ket{\Psi^+_\calH} = (\id \otimes X^T) \ket{\Psi^+_\calH}
\end{equation}
for any operator $X$ on $\calH$, where $X^T$ denotes the transpose in the basis $\ket i$.
Thus they are invariant under operations of the form $U \otimes \overline{U}$, where $U \in \U(\calH)$ is a unitary and where $\overline{U}$ denotes its complex conjugate with respect to the basis $\ket i$ \cite{HorodeckiHorodecki99}.
In particular, this implies that for any basis $\ket{\lambda,i}$ of $[\lambda]$ in which $S_k$ acts by orthogonal transformations,
\begin{equation*}
  \ket{\Psi^+_\lambda} := \frac 1 {\sqrt{\dim [\lambda]}} \sum_j \ket{\lambda,j} \otimes \ket{\lambda,j}
\end{equation*}
is the (unique up to phase) invariant vector in $[\lambda] \otimes [\lambda]$---as we had asserted before above \cref{Special CG}.
By using \eqref{eq:epr transpose} it is straightforward to verify that the following two well-known properties hold:
\begin{itemize}
\item We have the ``teleportation identity''
  \begin{equation}
  \label{eq:bell teleport}
    \left( \bra{\Psi^+_\calH} \otimes \id_{\mathcal H} \right)
   \left( \id_{\mathcal H} \otimes \ket {\Psi^+_\calH} \right)
   = \frac 1 {\dim \mathcal H} \id_{\mathcal H}.
  \end{equation}
  This is the algebraic version of \cref{Two bubbles}.
  It follows that for any two operators $X \colon \calK \rightarrow \calK' \otimes \calH$ and $Y \colon \calH \otimes \calL \rightarrow \calL'$ we have the relation
  \begin{equation}
  \label{eq:bell teleport operators}
  \begin{aligned}
      &\left( \id_{\calK'} \otimes \bra{\Psi^+_\calH} \otimes \id_{\calL'} \right)
      \left( X \otimes \id_\calH \otimes Y \right)
      \left( \id_\calK \otimes \ket{\Psi^+_\calH} \otimes \id_\calL \right) \\
    = &\frac 1 {\dim \calH} (\id_{\calK'} \otimes Y) (X \otimes \id_\calL)
  \end{aligned}
  \end{equation}
  (which is best understood graphically).
\item The normalized trace of any operator $X$ can be written as
  \begin{equation}
  \label{eq:bell trace}
    \braket{\Psi^+_\calH | X \otimes \id_{\mathcal H} | \Psi^+_\calH} =
    \frac 1 {\dim \mathcal H} \tr X.
  \end{equation}
\end{itemize}
For any $\alpha$, $\beta$ and $\lambda$, we shall consider the following sets of vectors in $([\alpha] \otimes [\beta] \otimes [\lambda])^{S_k}$,
\begin{align}
  \ket{\alpha\beta\lambda,i} &:= (\Phi^{\alpha\beta}_{\lambda,i} \otimes \id_\lambda) \ket{\Psi^+_\lambda}, \nonumber \\
  \ket{\widetilde{\alpha\beta\lambda,i}} &:= \sqrt {\frac {d_\alpha d_\beta} {d_\lambda}} (\Phi^{\alpha\beta}_{\lambda,i} \otimes \id_\alpha \otimes \id_\beta)^\dagger (\ket{\Psi^+_\alpha} \otimes \ket{\Psi^+_\beta}), \label{eq:alternative basis}
\end{align}
constructed as in \cref{lem:invariant bases}.
We now prove algebraically that each set forms an orthonormal basis.
For the first,
\begin{align*}
    &\braket{\alpha\beta\lambda,i | \alpha\beta\lambda,i'}
  = \braket{\Psi^+_\lambda | (\Phi^{\alpha\beta}_{\lambda,i})^\dagger \Phi^{\alpha\beta}_{\lambda,i'} \otimes \id_\lambda | \Psi^+_\lambda}
  = \frac 1 {d_\lambda} \tr (\Phi^{\alpha\beta}_{\lambda,i})^\dagger \Phi^{\alpha\beta}_{\lambda,i'} \\
  = &\braket{\Phi^{\alpha\beta}_{\lambda,i}, \Phi^{\alpha\beta}_{\lambda,i'}}_\lambda
  = \delta_{i,i'}
\end{align*}
by \eqref{eq:bell trace} and the definition of the inner product.
For the second set of vectors,
\begin{align*}
    &\braket{\widetilde{\alpha\beta\lambda,i} | \widetilde{\alpha\beta\lambda,i'}}
  = \frac {d_\alpha d_\beta} {d_\lambda}
    \braket{\Psi^+_\alpha \otimes \Psi^+_\beta |
      \Phi^{\alpha\beta}_{\lambda,i} (\Phi^{\alpha\beta}_{\lambda,i'})^\dagger \otimes \id_\alpha \otimes \id_\beta |
      \Psi^+_\alpha \otimes \Psi^+_\beta} \\
  = &\frac 1 {d_\lambda} \tr \Phi^{\alpha\beta}_{\lambda,i} (\Phi^{\alpha\beta}_{\lambda,i'})^\dagger
  = \braket{\Phi^{\alpha\beta}_{\lambda,i'}, \Phi^{\alpha\beta}_{\lambda,i}}_\lambda
  = \delta_{i',i}.
\end{align*}

We now consider the recoupling coefficients. First, \eqref{6j via clebsch gordan} and \eqref{eq:bell trace} give
\begin{align*}
    &\sixj\alpha \beta\gamma\mu\nu\lambda_{ij}^{kl}
  = \frac 1 {d_\lambda} \tr
    (\Phi^{\alpha\nu}_{\lambda,k})^\dagger \left( \id_\alpha \otimes \Phi^{\beta\gamma}_{\nu,l} \right)^\dagger
    \left( \Phi^{\alpha\beta}_{\mu,j} \otimes \id_{\gamma} \right) \Phi^{\mu\gamma}_{\lambda,i} \\
  = &\bra{\Psi^+_\lambda}
     \left( (\Phi^{\alpha\nu}_{\lambda,k})^\dagger \otimes \id_\lambda \right)
     \left( \id_\alpha \otimes \Phi^{\beta\gamma}_{\nu,l} \otimes \id_\lambda \right)^\dagger
     \left( \Phi^{\alpha\beta}_{\mu,j} \otimes \id_\gamma \otimes \id_\lambda \right)
     \underbrace{\left( \Phi^{\mu\gamma}_{\lambda,i} \otimes \id_\lambda \right)
     \ket{\Psi^+_\lambda}}_{=\ket{\mu\gamma\lambda,i}}.
\end{align*}
We may now apply \eqref{eq:bell teleport operators} to $X = \ket{\mu\gamma\lambda,i}$ and $Y = \Phi^{\alpha\beta}_{\mu,j}$ in order to rewrite 
\begin{align*}
    &\left( \Phi^{\alpha\beta}_{\mu,j} \otimes \id_\gamma \otimes \id_\lambda \right) \ket{\mu\gamma\lambda,i} \\
  = &d_\mu \left( \id_\gamma \otimes \id_\lambda \otimes \bra{\Psi^+_\mu} \otimes \id_\alpha \otimes \id_\beta \right)
    \left( \ket{\mu\gamma\lambda,i} \otimes \id_\mu \otimes \Phi^{\alpha\beta}_{\mu,j} \right)
    \ket{\Psi^+_\mu} \\
  = &d_\mu \left( \id_\gamma \otimes \id_\lambda \otimes \bra{\Psi^+_\mu} \otimes \id_\alpha \otimes \id_\beta \right)
    \left( \ket{\mu\gamma\lambda,i} \otimes \ket{\alpha\beta\mu,j} \right).
\end{align*}
Continuing in this way and using definition \eqref{eq:alternative basis}, we obtain the following expression for the matrix elements of the recoupling coefficient:
\begin{align*}
  \sixj\alpha \beta\gamma\mu\nu\lambda_{ij}^{kl}
  = &d_\mu d_\alpha d_\beta d_\gamma d_\nu \sqrt{\frac {d_\lambda} {d_\alpha d_\nu}} \sqrt{\frac {d_\nu} {d_\beta d_\gamma}} \\
  \times &\left( \bra{\Psi^+_\alpha} \otimes \bra{\Psi^+_\beta} \otimes \bra{\Psi^+_\gamma} \otimes \bra{\Psi^+_\mu} \otimes \bra{\Psi^+_\nu} \otimes \bra{\Psi^+_\lambda} \right) \\
  &\left( \ket{\mu\gamma\lambda,i} \otimes \ket{\alpha\beta\mu,j} \otimes \ket{\widetilde{\alpha\nu\lambda,k}} \otimes \ket{\widetilde{\beta\gamma\nu,l}} \right)
\end{align*}
The sum of their absolute values squared over all indices $i$, $j$, $k$ and $l$ is equal to
\begin{equation}
\label{eq:algebraic recoupling norm squared}
\begin{aligned}
    \frac 1 {d_\mu d_\nu} \normHS{\sixj\alpha\beta\gamma\mu\nu\lambda}^2
  =\ &d_\alpha d_\beta d_\gamma d_\lambda d_\mu d_\nu \\
  \times &\tr \left( P^{\alpha\alpha} \otimes P^{\beta\beta} \otimes P^{\gamma\gamma} \otimes P^{\mu\mu} \otimes P^{\nu\nu} \otimes P^{\lambda\lambda} \right) \\
  &\quad \left( P^{\mu\gamma\lambda} \otimes P^{\alpha\beta\mu} \otimes P^{\alpha\nu\lambda} \otimes P^{\beta\gamma\nu} \right),
\end{aligned}
\end{equation}where $P^{\alpha\beta\mu}$ denotes the orthogonal projection onto $([\alpha] \otimes [\beta] \otimes [\mu])^{S_k}$,
$P^{\alpha\alpha} = \proj{\Psi^+_\alpha}$, etc.
Equation~\eqref{eq:algebraic recoupling norm squared} is the algebraic analogue of \cref{prp:invariant object}.
As before, \cref{cor:symmetry} is a direct consequence of its symmetries.

\section{Entropy inequalities from symmetries: strong subadditivity}
\label{sec:ssa}

We now prove the strong subadditivity and weak monotonicity of the von Neumann entropy as a direct consequence of \cref{main result} and the symmetry properties in \cref{cor:symmetry}.

To start, we note that it follows from the first invariance asserted in \cref{cor:symmetry} and the polynomial upper bound~\eqref{6j lower upper} that
\begin{equation}
  \label{hilbert schmidt ratio}
  \normHS{\sixj \alpha \beta \gamma \mu \nu \lambda}^2
  = \dfrac{d_\mu d_\nu} {d_\beta d_\lambda}
  \normHS{\sixj \alpha \mu \gamma \beta \lambda \nu}^2
  \leq \poly(k) \dfrac{d_\mu d_\nu} {d_\beta d_\lambda}.
\end{equation}
Hence if $\rho_{ABC}$ is a density operator on $\CC^{abc}$ then \cref{main result} implies that
\begin{equation}
  \label{subadd ratio cor}
  \dfrac{d_\mu d_\nu} {d_\beta d_\lambda}
  \geq
  \frac 1 {\mathrm{poly}(k)}
\end{equation}
for sequences of normalized Young diagrams that converge to the respective spectra of the reduced density operators.
Since for large $k$,  $\frac 1 {k} \log_2 \dim [\lambda] \rightarrow H(\bar\lambda) = \sum_i -\bar\lambda_i \log_2 \bar\lambda_i$ \cite{christandlmitchison06}, we conclude that the von Neumann entropy is strongly subadditive:

\begin{thm}[Strong subadditivity of von Neumann entropy~\cite{liebruskai73}]
  For any density operator $\rho_{ABC}$ on $\CC^{abc}$,
  \begin{equation}
  \label{eq:ssa}
  	S(\rho_{AB}) + S(\rho_{BC}) \geq S(\rho_{B}) + S(\rho_{ABC}),
  \end{equation}
  where $S(\rho) := -\tr \rho \log \rho$ denotes the von Neumann entropy of a density operator $\rho$.
\end{thm}


For $[\beta]$ the trivial representation, this proof of strong subadditivity reduces to the proof of subadditivity given in \cite{christandlmitchison06} (cf.\ the discussion at the end of \cref{sec:tripartite}).
The weak monotonicity,
\[
  S(\rho_{AB}) + S(\rho_{BC}) \geq S(\rho_A) + S(\rho_C),
\]
follows similarly from the second invariance in \cref{cor:symmetry}.

As we have mentioned in the introduction, it is an important open question to decide whether the von Neumann entropy satisfies any other linear entropy inequalities beyond strong subadditivity and weak monotonicity.
Our proofs of the latter are markedly different from previous proofs in the literature, which are built on operator convexity \cite{liebruskai73,NielsenPetz05,Ruskai07a,Effros09} or asymptotic equipartition \cite{Renner05,Gromov13} (cf.\ the review \cite{Ruskai05}).
In our approach, we interpret an entropy inequality as the asymptotic shadow of a dimensional relation such as~\eqref{subadd ratio cor}.
We establish the latter by exploiting the symmetries of a corresponding representation-theoretic object -- the recoupling coefficients -- together with a lower bound from spectrum estimation.
This hints towards an intriguing route towards establishing new entropy inequalities---namely, by constructing novel representation-theoretic objects (e.g., by composing Clebsch--Gordan maps) and uncovering their symmetries (as can conveniently be done using the graphical calculus).

\section{Sums of matrices and quantum marginals}
We have earlier discussed Wigner's characterization of the asymptotics of recoupling coefficients for $\SU(2)$ in terms of the existence of Euclidean tetrahedra.
In this section, we will see that his result can in fact be related to \cref{main result} on a precise mathematical level.
Before we describe the construction, we note that the existence of a tetrahedron with side lengths $j_A$, $j_B$, $j_C$, $j_{AB}$, $j_{BC}$, $j_{ABC}$ is equivalent to the existence of vectors $\vec j_A$, $\vec j_B$, $\vec j_C$ such that $\abs{\vec j_A} = j_A$, $\abs{\vec j_B} = j_B$, $\abs{\vec j_C} = j_C$, $\abs{\vec j_A + \vec j_B} = j_{AB}$, $\abs{\vec j_B + \vec j_C} = j_{BC}$, and $\abs{\vec j_A + \vec j_B + \vec j_C} = j_{ABC}$.
By the usual identification of $\RR^3$ with the Lie algebra $\mathfrak{su}_2$, we may assign to each vector $\vec j \in \RR^3$ the Hermitian $2 \times 2$-matrix $\vec j \cdot \vec \sigma$, where $\vec\sigma = (\sigma_x, \sigma_y, \sigma_z)$ is the vector of Pauli matrices.
Then the above becomes an instance of the following general problem \cite{backens10}: 
\begin{prb}[Partial sums of matrices]
\label{prb:generalized weyl}
	Do there exist Hermitian $d \times d$-matrices $\calA$, $\calB$ and $\calC$ with given prescribed eigenvalues for $\calA$, $\calB$, $\calC$, $\calA + \calB$, $\calB + \calC$ and $\calA + \calB + \calC$?
\end{prb}
This is a natural generalization of the problem of determining the relation between the eigenvalues of $A$, $B$ and $A+B$, posed by Weyl, whose solution conjectured by Horn \cite{horn62} was proved in the celebrated works \cite{Klyachko-Horn,KnuTao99}.

In \cite{klyachko04}, it was shown how the one-body quantum marginal problem degenerates to Weyl's problem in an appropriate limit.
We will now show that \cref{prb:generalized weyl} can likewise be considered as a special case of the quantum marginals problem for overlapping subsystems characterized by \cref{main result}---both on the level of geometry and on the level of representation theory.

\medskip

Let $\calA$, $\calB$, $\calC$ be Hermitian $d \times d$-matrices.
Without loss of generality, we may assume that $\calA, \calB, \calC \geq 0$ and that $1 - \tr (\calA + \calB + \calC) \geq \norm{\calA+\calB+\calC}_\infty$ (else, we may add suitable multiples of the identity and rescale).
Generalizing a construction from \cite{christandl-horn}, we define a tripartite density operator $\rho_{ABC}$ as the reduced density operator of the four-party pure state
\begin{equation}
\label{eq:purification embedding}
\begin{aligned}
  \ket{\psi}_{ABCD}
&=\;\sum_{i=1}^d \sqrt{\calA} \ket i_A \otimes \ket{00i}_{BCD}
+ \sum_{j=1}^d \sqrt{\calB} \ket j_B \otimes \ket{00j}_{ACD} \\
+\;&\sum_{k=1}^d \sqrt{\calC} \ket k_C \otimes \ket{00k}_{ABD}
+ \sqrt{1 - \tr (\calA + \calB + \calC)} \ket{0000}_{ABCD}
\end{aligned}
\end{equation}
in $(\CC^{d+1})^{\otimes 4}$, where we consider $\calA$, $\calB$ and $\calC$ as acting on $\CC^d$, and $\CC^{d+1} = \CC \ket 0 \oplus \CC^d$.
We record the following properties:

\begin{lem}
\label{lem:purification embedding}
  Let $\rho_{ABC}$ be the quantum state with purification \eqref{eq:purification embedding}.
  Then the non-zero eigenvalues of $\rho_{ABC}$ and \emph{all} its reduced density operators are given by
  \begin{align*}
    \spec \rho_{ABC} &= (1 - \tr (\calA+\calB+\calC), \spec (\calA+\calB+\calC)), \\
    \spec \rho_{AB} &= (1 - \tr (\calA+\calB), \spec (\calA+\calB)), \\
    \spec \rho_{A} &= (1 - \tr \calA, \spec \calA),
  \end{align*}
  etc.
\end{lem}
\begin{proof}
  Observe that $\ket{\psi_{ABCD}}$ is built from a sum of (unnormalized) maximally entangled states \eqref{eq:epr} on $AD$, $BD$ and $CD$, respectively.
  By using \eqref{eq:epr transpose} and the orthogonality properties of the construction \eqref{eq:purification embedding}, we thus find that
  \begin{equation*}
    \rho_D = \left( 1-\tr(\calA+\calB+\calC) \right) \proj 0 + \left( \overline{\calA+\calB+\calC} \right).
  \end{equation*}
  Thus the density operator $\rho_D$ is block-diagonal with respect to $\CC \proj 0 \oplus \CC^d$.
  These blocks can be jointly diagonalized, and our assumption $1 - \tr(\calA+\calB+\calC) \geq \norm{\calA+\calB+\calC}_\infty$ implies that $1 - \tr(\calA+\calB+\calC)$ is the largest eigenvalue.
  Since moreover the spectrum of a Hermitian matrix is invariant under conjugation, this shows the first claim, as $\rho_{ABC}$ and $\rho_D$ have the same non-zero eigenvalues.

  If we only trace out the first two systems, then we instead get a block decomposition of the form
  \begin{equation*}
    \rho_{CD} = \proj 0_C \otimes \left( \overline{\calA+\calB} \right) + \proj\phi_{CD},
  \end{equation*}
  where $\ket\phi_{CD} = \sum_{k=1}^d \sqrt{\calC} \ket k_C \otimes \ket k_D + \sqrt{1 - \tr (\calA + \calB + \calC)} \ket{00}_{CD}$.
  Using \eqref{eq:bell trace}, we find that $\braket{\phi_{CD} | \phi_{CD}} =  1 - \tr(\calA+\calB)$, so that the second claim follows as above.

  The last claim follows from
  \begin{align*}
    \rho_A
    = &\sum_i \sqrt{\calA} \proj i_A \sqrt{\calA}
    \;+\; \Bigl( \tr \calB + \tr \calC + \bigl( 1 - \tr (\calA + \calB + \calC) \bigr) \Bigr) \proj 0_A \\
    = &\calA + \left( 1 - \tr \calA \right) \proj 0_A,
  \end{align*}
  which is established similarly.
  All other marginal spectra can be computed in the same way.
\end{proof}

We have thus obtained an embedding of triples of matrices into the space of tripartite density operators that preserves the eigenvalue information.
We remark that \cref{lem:purification embedding} can in particular be used to obtain entropy inequalities for convex combinations of Hermitian matrices from entropy inequalities for multipartite quantum states. E.g., strong subadditivity~\eqref{eq:ssa} yields after some manipulations the following inequality
\begin{align*}
&\quad h(a+b) + (a+b) S\Bigl(\frac{\calA+\calB}{a+b}\Bigr) + h(b+c) + (b+c) S\Bigl(\frac{\calB+\calC}{b+c}\Bigr) \\
&\geq h(b) + b S\Bigl(\frac \calB b\Bigr) + h(a+b+c) + (a+b+c) S\Bigl(\frac{\calA+\calB+\calC}{a+b+c}\Bigr),
\end{align*}
where we have abbreviated $a = \tr\calA$, $b = \tr\calB$, and $c = \tr\calC$, and where $h(x) := -x \log x - (1-x) \log (1-x)$ denotes the binary entropy function.

\begin{cor}
\label{cor:polygonal}
  The state $\rho_{ABC}$ has rank at most $d+1$ and it satisfies the equality 
  \begin{equation}
  \label{eq:polygonal equality}
    r_{A,1} + r_{B,1} + r_{C,1} = 2 + r_{ABC,1},
  \end{equation}
  where $r_{I,1}$ denotes the maximal eigenvalue of the reduced density operator $\rho_I$.
\end{cor}

In other words, the states $\rho_{ABC}$ constructed above saturate the ``polygonal inequality''~\cite{higuchi_sudbery_szulc_2003_onequbit}
\begin{equation}
\label{polygonal ieq}
	r_{A,1} + r_{B,1} + r_{C,1} \leq 2 + r_{ABC,1},
\end{equation}
which holds for arbitrary quantum states, with equality.
We now show the following converse to \cref{cor:polygonal}.

\begin{prp}
\label{prp:embedding converse}
  Let $\rho_{ABC}$ be a tripartite quantum state on $(\CC^{d+1})^{\otimes 3}$ of rank at most $d+1$ that satisfies the equality \eqref{eq:polygonal equality}. Then $\rho_{ABC}$ can up to local unitaries be purified in the form \eqref{eq:purification embedding}.
\end{prp}
\begin{proof}
  For this, we recall the following proof of the inequality~\eqref{polygonal ieq}.
  Let $\ket 0_A$, $\ket 0_B$, and $\ket 0_C$ denote maximal eigenvectors of $\rho_A$, $\rho_B$ and $\rho_C$, respectively.
  Let $P_A$, $P_B$ and $P_C$ denote the corresponding orthogonal projectors, and set $P_A^\perp := \id - P_A$, etc.
  Then,
  \begin{align*}
      &r_{A,1} + r_{B,1} + r_{C,1} \\
    =\;&\tr \rho_{ABC} (P_A \otimes \id_{BC} + P_B \otimes \id_{AC} + P_C \otimes \id_{AB}) \\
    =\;&2 \tr \rho_{ABC} \id_{ABC}
    + \tr \rho_{ABC} (P_A \otimes P_B \otimes P_C) \\
    -\;&2 \tr \rho_{ABC} (P_A^\perp \otimes P_B^\perp \otimes P_C^\perp)
    - \tr \rho_{ABC} (P_A \otimes P_B^\perp \otimes P_C^\perp) \\
    -\;&\tr \rho_{ABC} (P_A^\perp \otimes P_B \otimes P_C^\perp)
    - \tr \rho_{ABC} (P_A^\perp \otimes P_B^\perp \otimes P_C) \\
    \leq\;&2 + \braket{000 | \rho_{ABC} | 000} \leq 2 + r_{ABC,1}.
  \end{align*}
  The first inequality is obtained by omitting the terms with negative signs, and the second by using the variational principle for the maximal eigenvalue of $\rho_{ABC}$.
  It is thus immediate that we have equality if and only if $\ket{000}_{ABC}$ is a maximal eigenvector of $\rho_{ABC}$ and
  \begin{equation}
  \label{eq:negative sign projectors}
  \begin{aligned}
    &\tr \rho_{ABC} (P_A^\perp \otimes P_B^\perp \otimes P_C^\perp) =
    \tr \rho_{ABC} (P_A^\perp \otimes P_B^\perp \otimes P_C) \\
    =\;&\tr \rho_{ABC} (P_A^\perp \otimes P_B \otimes P_C^\perp) =
    \tr \rho_{ABC} (P_A \otimes P_B^\perp \otimes P_C^\perp) = 0.
  \end{aligned}
  \end{equation}
  Let us now assume that this is the case. Since the rank of $\rho_{ABC}$ was assumed to be at most $d+1$, we can find a purification on $(\CC^{d+1})^{\otimes 4}$.
  Since $\ket{000}_{ABC}$ is a maximal eigenvector, we can arrange for the first term of the Schmidt decomposition to be $\sqrt{r_{ABC,1}} \ket{000}_A \otimes \ket{0}_D$. In other words, the purification can be chosen of the form
  \begin{equation*}
    \ket{\psi_{ABCD}} = \sqrt{r_{ABC,1}} \ket{0000}_{ABCD} + \sum_{i,j,k,l} \psi_{ijkl} \ket{ijkl}_{ABCD}.
  \end{equation*}
  A priori, the right-hand side can run over all indices $(i,j,k) \neq (0,0,0)$ and $l \neq 0$ by orthogonality of the bases in the Schmidt decomposition.
  But \eqref{eq:negative sign projectors} implies that in fact precisely two out of the three indices $(i,j,k)$ have to be zero, so that we obtain
  \begin{align*}
    \ket{\psi_{ABCD}} = \sqrt{r_{ABC,1}} \ket{0000}_{ABCD}
    + \sum_{i,l=1}^d \psi_{i00l} \ket{i00l}_{ABCD} \\
    + \sum_{j,l=1}^d \psi_{0j0l} \ket{0j0l}_{ABCD}
    + \sum_{k,l=1}^d \psi_{00kl} \ket{00kl}_{ABCD}.
  \end{align*}
  Thus we may define $d \times d$-matrices $X_A$, $X_B$ and $X_C$ such that
  \begin{align*}
    \ket{\psi_{ABCD}} = \sqrt{r_{ABC,1}} \ket{0000}_{ABCD}
    + \sum_i X_A \ket i_A \otimes \ket{00i}_{BCD} \\
    + \sum_j X_B \ket j_B \otimes \ket{00j}_{ACD}
    + \sum_k X_C \ket k_C \otimes \ket{00k}_{ABD}.
  \end{align*}
  Finally, we use the polar decomposition to write $X_A = U_A \abs{X_A}$, etc., and set $\sqrt{\calA} := \abs{X_A}$, etc. Then \eqref{eq:purification embedding} is indeed a purification of the quantum state $(U^\dagger_A \otimes U^\dagger_B \otimes U^\dagger_C) \rho_{ABC} (U_A \otimes U_B \otimes U_C)$, which is locally unitarily equivalent to $\rho_{ABC}$.
\end{proof}

The following theorem shows that \cref{prb:generalized weyl} -- and, in particular, the existence of Wigner's tetrahedra -- is in a precise mathematical sense a special case of the quantum marginal problem with overlapping marginals covered by \cref{main result}.
This generalizes the corresponding result for the one-body quantum marginal problem mentioned above and in particular gives a geometric proof of the latter.

\begin{thm}
\label{thm:weyl wigner embedding}
  Let $s_{\calA}$, $s_{\calB}$, $s_{\calC}$, $s_{\calA+\calB}$, $s_{\calB+\calC}$, $s_{\calA+\calB+\calC}$ be vectors in $\RR^d_{\geq 0}$ with weakly decreasing entries.
  Assume that $\norm{s_{\calA}}_1 + \norm{s_{\calB}}_1 = \norm{s_{\calA+\calB}}_1$, etc., and that $1 - \norm{s_{\calA+\calB+\calC}}_1 \geq s_{\calA+\calB+\calC,1}$.%
\footnote{The former requirements correspond to the equalities $\tr \calA + \tr \calB = \tr (\calA + \calB)$, etc., and therefore are clearly necessary. The latter can always be achieved by rescaling.}
  Then the following are equivalent:
  \begin{enumerate}
    \item[(1)] There exist Hermitian $d \times d$-matrices $\calA$, $\calB$ and $\calC$ with
    $\spec (\calA+\calB+\calC) = s_{\calA+\calB+\calC}$,
    $\spec (\calA+\calB) = s_{\calA+\calB}$,
    $\spec \calA = s_A$,
    etc.\ as their partial sums.
    \item[(2)] There exists a quantum state $\rho_{ABC}$ on $(\CC^{d+1})^{\otimes 3}$ with non-zero eigenvalues
    $\spec \rho_{ABC} = (1-\norm{s_{\calA+\calB+\calC}}_1, s_{\calA+\calB+\calC})$,
    $\spec \rho_{AB} = (1-\norm{s_{\calA+\calB}}_1,$ $s_{\calA+\calB})$,
    $\spec \rho_{A} = (1-\norm{s_{\calA}}_1, s_{\calA})$,
    etc.\ for their reduced density operators.
  \end{enumerate}
\end{thm}
\begin{proof}
  $(1) \Rightarrow (2)$ is the content of \cref{lem:purification embedding}.
  For $(2) \Rightarrow (1)$, we use \cref{prp:embedding converse} to obtain Hermitian $d \times d$-matrices $\calA$, $\calB$, $\calC$ such that $\rho_{ABC}$ is locally unitarily equivalent to the state $\rho'_{ABC}$ with purification \eqref{eq:purification embedding}.
  Since the spectra of $\rho_{ABC}$ and its reduced density operators are left invariant by local unitaries, \cref{lem:purification embedding} implies that the partial sums of these matrices $\calA$, $\calB$ and $\calC$ have the desired spectra.
\end{proof}

Similar statements can be proved for all marginal spectra (i.e., including $s_{\calA+\calC}$, since \cref{lem:purification embedding} holds for all reduced density operators) as well as for an arbitrary number of summands. Thus the quantum marginal problem with overlaps is a precise generalization of the problem of characterizing the eigenvalues of partial sums of Hermitian matrices.

\subsection*{Recoupling Coefficients of the Unitary and Symmetric Groups}%

We now show an analogous statement to \cref{thm:weyl wigner embedding} on the level of representation theory---namely, that the recoupling coefficients of the unitary group can be obtained as special recoupling coefficients of the symmetric group.

To see this, let $\lambda$ be a Young diagram.
In \cite{Nishiyama00}, the restriction of an irreducible $\U(k)$-representation $V^k_\lambda$ to the \emph{subgroup} of permutation matrices $S_k \subseteq U(k)$ has been computed:
\begin{equation}
\label{eq:nishiyama}
  V^k_\lambda \big|_{S_k \subseteq \U(k)}
  = \bigoplus_{\mu \vdash_k \abs\lambda}
    \Ind_{S_\alpha \times S_{k - \abs\alpha}}^{S_k}
    \left( [\lambda]^{S_\mu} \otimes \mathbf 1 \right)
\end{equation}
In the right-hand side of \eqref{eq:nishiyama}, $\Ind$ denotes an \emph{induced representation} and $\alpha$ is the unique Young diagram with number of boxes $\abs\alpha$ equal to the number of rows of $\mu$ such that
\[N_{S_{\abs\mu}}(S_\mu) / S_\mu \cong S_\alpha,\]
where $S_\mu := S_{\mu_1} \times S_{\mu_2} \times \dots \subseteq S_{\abs\mu}$ is the Young subgroup corresponding to $\mu$,
$N_{S_{\abs\mu}}(S_\mu)$ its normalizer in $S_{\abs\mu}$,
and $S_\alpha \subseteq S_{\abs\alpha}$ the Young subgroup of $\alpha$.
Note that $S_\alpha$ indeed acts on the subspace $[\lambda]^{S_\mu}$.

\begin{lem}
  If $k - \abs\lambda \geq \lambda_1$, then $\lambda' := (k-\abs\lambda,\lambda)$ is again a Young diagram, and
  \begin{equation}
  \label{eq:nishiyama lemma}
    V^k_\lambda \big|_{S_k \subseteq \U(k)} \cong [\lambda'] \oplus \dots.
  \end{equation}
  Here and in the following, we write ``\dots'' for a sum of irreducible $S_k$-representations whose Young diagrams have \emph{longer} first rows than all the preceding ones.

  Otherwise, if $k - \abs\lambda < \lambda_1$ then the first row of any Young diagram that appears in the restriction of $V^k_\lambda$ is longer than $k - \abs\lambda$. 
\end{lem}
\begin{proof}
  Since induction is transitive, we can rewrite \eqref{eq:nishiyama} as
  \begin{equation}
  \label{eq:nishiyama double induction}
  V^k_\lambda \big|_{S_k \subseteq \U(k)}
  = \bigoplus_{\mu \vdash_k \abs\lambda}
    \Ind_{S_{\abs\alpha} \times S_{k - \abs\alpha}}^{S_k}
    \left( \Ind_{S_\alpha}^{S_{\abs\alpha}} \left( [\lambda]^{S_\mu} \right) \otimes \mathbf 1 \right).
  \end{equation}
  The \emph{Pieri formula} asserts that the $S_k$-representation induced from a tensor product of an irreducible $S_{\abs\alpha}$-representation $[\nu]$ with the trivial $S_{k-\abs\alpha}$-representation $\mathbf 1$ is given by the sum over all irreducible $S_k$-representations with a Young diagram that can be obtained by adding $k-\abs\alpha$ boxes to $\nu$, with no two in the same column (see, e.g., \cite[\S{}2.2, (4)]{Fulton97}).
  The first row of any such Young diagram is of length at least $k-\abs\alpha$.
  As $\abs\alpha$ is equal to the number of rows of $\mu$, we obtain the lower bound
  \begin{equation}
  \label{eq:nishiyama lower bound}
    k-\abs\alpha \geq k-\abs\mu = k-\abs\lambda
  \end{equation}
  on the length of the first row of any irreducible $S_k$-representation that occurs in the restriction of $V^k_\lambda$.

  Equality in \eqref{eq:nishiyama lower bound} can occur only if each row of $\mu$ contains a single box, i.e., for $\mu=(1,\ldots,1,0,\ldots,0)$, such that $\abs\alpha = \abs\mu = \abs\lambda$.
  Then $S_\mu$ is the trivial group, $S_\alpha = S_{\abs\alpha} = S_{\abs\lambda}$, and the corresponding summand in \eqref{eq:nishiyama double induction} is equal to
  \begin{equation}
  \label{eq:nishiyama only way}
    \Ind_{S_{\abs\lambda} \times S_{k - \abs\lambda}}^{S_k} \left( [\lambda] \otimes \mathbf 1 \right).
  \end{equation}
  By the Pieri formula, \eqref{eq:nishiyama only way} contains an irreducible $S_k$-representation with first row of length $k-\abs\lambda$ if and only if $\lambda_1 \leq k-\abs\lambda$ (since we only add boxes to $\lambda$).
  Moreover, if this condition is satisfied then there is only a single option, namely to place one box in each of the $k-\abs\lambda$ leftmost columns, resulting in the Young diagram $\lambda' = (k-\abs\lambda,\lambda)$.
\end{proof}

We now consider the decomposition of a tensor product of irreducible $\U(k)$-representations,
\begin{equation*}
  V^k_\alpha \otimes V^k_\beta = \bigoplus_\lambda c^{\alpha,\beta}_\lambda V^k_\lambda,
\end{equation*}
where we assume that $k - \abs\alpha \geq \alpha_1$ and $k - \abs\beta \geq \beta_1$.
The multiplicities $c^{\alpha,\beta}_\lambda$ are known as the \emph{Littlewood--Richardson coefficients}, and they are independent of the choice of $k$ (if $k$ is at least as large as the number of rows in the Young diagrams involved) \cite{JamesKerber81}.
Moreover, $c^{\alpha,\beta}_\lambda$ is non-zero only if $\abs\alpha + \abs\beta = \abs\lambda$.
It follows from the points above that
\begin{equation*}
    V^k_\alpha \otimes V^k_\beta \big|_{S_k \subseteq \U(k)}
  = \bigoplus_{\mathclap{\substack{\abs\alpha + \abs\beta = \abs\lambda}}} c^{\alpha,\beta}_\lambda V^k_\lambda  \big|_{S_k \subseteq \U(k)}
  = \left(\;\;\;\;\;\bigoplus_{\mathclap{\substack{\abs\alpha + \abs\beta = \abs\lambda, \\ k - \abs\lambda \geq \lambda_1}}} c^{\alpha,\beta}_\lambda [\lambda'] \right) \oplus \dots
\end{equation*}
On the other hand, by applying \eqref{eq:nishiyama lemma} to the individual tensor factors we find that
\begin{align*}
  &V^k_\alpha \otimes V^k_\beta \big|_{S_k \subseteq \U(k)}
  = \left( [\alpha'] \oplus \dots \right) \otimes \left( [\beta'] \oplus \dots \right)
  = \left(\;\;\;\;\;\bigoplus_{\mathclap{\substack{\abs\lambda = \abs\alpha + \abs\beta \\ k - \abs\lambda \geq \lambda_1}}} g_{\alpha',\beta',\lambda'} [\lambda'] \right) \oplus \dots,
\end{align*}
where $g_{\alpha',\beta',\lambda'}$ are the Kronecker coefficients. 
In the last inequality, we have used that $g_{\alpha',\beta',\lambda'} > 0$ only if $\abs\lambda \leq \abs\alpha + \abs\beta$ \cite[Theorem 2.9.22]{JamesKerber81}.
By comparing coefficients we find that
$c^{\alpha,\beta}_\lambda = g_{\alpha',\beta',\lambda'}$ for all triples of Young diagrams with $\abs\alpha + \abs\beta = \abs\gamma$ and $k$ large enough.
We thus recover a well-known result due to Littlewood and Murnaghan that states that the Littlewood--Richardson coefficients are a special case of the Kronecker coefficients \cite{Littlewood58, Murnaghan55}.

What is more, the argument shows that the Clebsch--Gordan embeddings $\Phi^{\alpha'\beta'}_{\lambda'}$ for $S_k$ can be obtained by restricting the ones of $\U(k)$.
In view of \eqref{6j via clebsch gordan}, this implies directly that the recoupling coefficients are the same, since they are built solely from the action on the multiplicity spaces.
Again, the recoupling coefficients for $\U(k)$ do not depend on the choice of $k$ (if $k$ is at least as large as the number of rows in the Young diagrams involved).

\section*{Acknowledgements}
We thank M.~Backens, M.~Gromov, D.~Gross, H.~Haggard, F.~Hellmann, W.~Kami\'{n}ski, A.~Knutson, G.~Mitchison, M.B.~Ruskai, L.~Vinet and R.~Werner for valuable discussions.

We acknowledge financial support by
the German Science Foundation (grant CH 843/2-1),
the Swiss National Science Foundation (grants PP00P2-128455, 20CH21-138799 (CHIST-ERA project CQC)),
the Swiss National Center of Competence in Research `Quantum Science and Technology (QSIT)',
the Swiss State Secretariat for Education and Research supporting COST action MP1006,
the European Research Council under the European Union's Seventh Framework Programme (FP/2007-2013)/ERC Grant Agreement no.\ 337603,
the Simons Foundation,
FQXi,
and a Sapere Aude: DFF-Starting Grant.
M.B.~\c{S}ahino\u{g}lu acknowledges support of the Excellence Scholarship and Opportunity Programme of ETH Z\"urich.

\bibliographystyle{unsrt}
\bibliography{strong6j}

\begin{thebibliography}{10}

\bibitem{coleman63}
A.~J. Coleman.
\newblock {Structure of Fermion Density Matrices}.
\newblock {\em Rev. Mod. Phys.}, 35:668--686, 1963.

\bibitem{colemanyukalov00}
A.~J. Coleman and V.~I. Yukalov.
\newblock {\em {Reduced Density Matrices: Coulson's Challenge}}, volume~72 of
  {\em Lecture Notes in Chemistry}.
\newblock Springer, 2000.

\bibitem{christandlmitchison06}
M.~Christandl and G.~Mitchison.
\newblock {The Spectra of Quantum States and the Kronecker Coefficients of the
  Symmetric Group}.
\newblock {\em Commun. Math. Phys.}, 261:789--797, 2006.

\bibitem{klyachko04}
A.~A. Klyachko.
\newblock Quantum marginal problem and representations of the symmetric group,
  2004.

\bibitem{daftuarhayden04}
S.~Daftuar and P.~Hayden.
\newblock {Quantum state transformations and the Schubert calculus}.
\newblock {\em Ann. Phys.}, 315:80--122, 2004.

\bibitem{walter13}
M.~Walter, B.~Doran, D.~Gross, and M.~Christandl.
\newblock Entanglement polytopes: Multiparticle entanglement from
  single-particle information.
\newblock {\em Science}, 340:1205--1208, 2013.

\bibitem{sawicki12}
Adam Sawicki, Micha\l{} Oszmaniec, and Marek Ku\ifmmode~\acute{s}\else
  \'{s}\fi{}.
\newblock Convexity of momentum map, morse index, and quantum entanglement,
  2012.

\bibitem{sawicki12pra}
Adam Sawicki, Micha\l{} Oszmaniec, and Marek Ku\ifmmode~\acute{s}\else
  \'{s}\fi{}.
\newblock Critical sets of the total variance can detect all stochastic local
  operations and classical communication classes of multiparticle entanglement.
\newblock {\em Phys. Rev. A}, 86:040304, 2012.

\bibitem{liu06}
Y.-K. Liu.
\newblock {Consistency of Local Density Matrices is QMA-complete}.
\newblock In {\em Proc. RANDOM}, pages 438--449, 2006.

\bibitem{liuchristandlverstraete07}
Y.-K. Liu, M.~Christandl, and F.~Verstraete.
\newblock {Quantum Computational Complexity of the $N$-Representability
  Problem: QMA Complete}.
\newblock {\em Phys. Rev. Lett.}, 98:110503, 2007.

\bibitem{liebruskai73}
E.~H. Lieb and M.~B. Ruskai.
\newblock {Proof of the Strong Subadditivity of Quantum-Mechanical Entropy}.
\newblock {\em J. Math. Phys.}, 14:1938--1941, 1973.

\bibitem{ohyapetz04}
M.~Ohya and D.~Petz.
\newblock {\em {Quantum Entropy and Its Use}}.
\newblock Springer, 2004.

\bibitem{pippenger03}
N.~Pippenger.
\newblock {The Inequalities of Quantum Information Theory}.
\newblock {\em IEEE Trans. Inf. Theory}, 49:773--789, 2003.

\bibitem{lindenwinter05}
N.~Linden and A.~Winter.
\newblock {A New Inequality for the von Neumann Entropy}.
\newblock {\em Commun. Math. Phys.}, 259:129--138, 2005.

\bibitem{cadneylindenwinter12}
J.~Cadney, N.~Linden, and A.~Winter.
\newblock Infinitely many constrained inequalities for the von neumann entropy.
\newblock {\em IEEE Trans. Inf. Theory}, 58:3657--3663, 2012.

\bibitem{LindenMatusRuskaiEtAl13}
N.~Linden, F.~Mat\'{u}\u{s}, M.~B. Ruskai, and A.~Winter.
\newblock {The Quantum Entropy Cone of Stabiliser States}.
\newblock In {\em Proc. 8th TQC Guelph, LIPICS}, volume~22, page 270, 2013.

\bibitem{gross2013stabilizer}
David Gross and Michael Walter.
\newblock Stabilizer information inequalities from phase space distributions.
\newblock {\em Journal of Mathematical Physics}, 54(8):082201, 2013.

\bibitem{hayden2013holographic}
Patrick Hayden, Matthew Headrick, and Alexander Maloney.
\newblock Holographic mutual information is monogamous.
\newblock {\em Physical Review D}, 87(4):046003, 2013.

\bibitem{bao2015holographic}
Ning Bao, Sepehr Nezami, Hirosi Ooguri, Bogdan Stoica, James Sully, and Michael
  Walter.
\newblock The holographic entropy cone.
\newblock {\em Journal of High Energy Physics}, 2015(9):1--48, 2015.

\bibitem{christandlharrowmitchison07}
M.~Christandl, A.~Harrow, and G.~Mitchison.
\newblock {The Spectra of Quantum States and the Kronecker Coefficients of the
  Symmetric Group}.
\newblock {\em Commun. Math. Phys.}, 277:575--585, 2007.

\bibitem{christandldorankousidiswalter12}
Matthias Christandl, Brent Doran, Stavros Kousidis, and Michael Walter.
\newblock Eigenvalue distributions of reduced density matrices.
\newblock {\em Communications in Mathematical Physics}, 332(1):1--52, 2014.

\bibitem{chanyeung02}
T.~H. Chan and R.~W. Yeung.
\newblock {On a Relation Between Information Inequalities and Group Theory}.
\newblock {\em IEEE Trans. Inf. Theory}, 48:1992--1995, 2002.

\bibitem{wigner59}
E.~P. Wigner.
\newblock {\em {Group Theory and Its Applications to the Quantum Mechanics of
  Atomic Apectra}}.
\newblock Academic Press, 1959.

\bibitem{racah42}
G.~Racah.
\newblock {Theory of Complex Spectra II}.
\newblock {\em Phys. Rev.}, 62:438--462, 1942.

\bibitem{ooguri92}
H.~Ooguri.
\newblock Topological lattice models in four dimensions.
\newblock {\em Mod. Phys. Lett. A}, 7:2799--2810, 1992.

\bibitem{reisenbergrovelli97}
M.~P. Reisenberger and C.~Rovelli.
\newblock ``sum over surfaces'' form of loop quantum gravity.
\newblock {\em Phys. Rev. D}, 56:3490--3508, 1997.

\bibitem{freidellouapre03}
L.~Freidel and D.~Louapre.
\newblock Asymptotics of $6j$ and $10j$ symbols.
\newblock {\em Class. Quant. Grav.}, 20:1267--1294, 2003.

\bibitem{barrettsteele03}
J.~W. Barrett and C.~M. Steele.
\newblock {Asymptotics of Relativistic Spin Networks}.
\newblock {\em Class. Quant. Grav.}, 20:1341--1362, 2003.

\bibitem{gurau}
R.~Gurau.
\newblock {The Ponzano-Regge Asymptotic of the $6j$ Symbol: An Elementary
  Proof}.
\newblock {\em Ann. Henri Poincar\'{e}}, 9:1413--1424, 2008.

\bibitem{Aquilanti}
V.~{Aquilanti}, H.~M. {Haggard}, A.~{Hedeman}, N.~{Jeevanjee}, R.~G.
  {Littlejohn}, and L.~{Yu}.
\newblock {Semiclassical mechanics of the Wigner $6j$-symbol}.
\newblock {\em J. Phys. A}, 45:065209, 2012.

\bibitem{LevinWen05}
Michael~A. Levin and Xiao-Gang Wen.
\newblock String-net condensation:\quad{}a physical mechanism for topological
  phases.
\newblock {\em Phys. Rev. B}, 71:045110, Jan 2005.

\bibitem{jordan08}
S.~P. Jordan.
\newblock Fast quantum algorithms for approximating some irreducible
  representations of groups, 2008.

\bibitem{jordan10}
S.~P. Jordan.
\newblock Permutational quantum computing.
\newblock {\em Quantum Inf. Comp.}, 10:470--497, 2010.

\bibitem{akimotohayashi11}
D.~Akimoto and M.~Hayashi.
\newblock {Discrimination of the change point in a quantum setting}.
\newblock {\em Phys. Rev. A}, 83:052328, 2011.

\bibitem{masterthesis12}
M.~B. \c{S}ahino\u{g}lu.
\newblock {\em On the relation between symmetric group $6j$-coefficients and
  the spectra of tripartite quantum states}.
\newblock Master thesis, ETH Zurich, 2012.

\bibitem{Walter14}
M.~Walter.
\newblock {\em {Multipartite Quantum States and their Marginals}}.
\newblock PhD thesis, ETH Zurich, 2014.
\newblock \href{http://arxiv.org/abs/1410.6820}{arXiv:1410.6820}.

\bibitem{goodmanwallach09}
R.~Goodman and N.~R. Wallach.
\newblock {\em {Symmetry, Representations and Invariants}}.
\newblock Springer, 2009.

\bibitem{keylwerner01}
M.~Keyl and R.~F. Werner.
\newblock Estimating the spectrum of a density operator.
\newblock {\em Phys. Rev. A}, 64:052311, 2001.

\bibitem{AlRuSa87}
R.~Alicki, S.~Rudnicki, and S.~Sadowski.
\newblock Symmetry properties of product states for the system of ${N}$
  $n$-level atoms.
\newblock {\em J. Math. Phys.}, 29:1158--1162, 1988.

\bibitem{HayMat02}
M.~Hayashi and K.~Matsumoto.
\newblock Quantum universal variable-length source coding.
\newblock {\em Phys. Rev. A}, 66:022311, 2002.

\bibitem{christandl_renner_2012_reliable}
M.~Christandl and R.~Renner.
\newblock {Reliable Quantum State Tomography}.
\newblock {\em Phys. Rev. Lett.}, 109:120403, 2012.

\bibitem{roberts1999}
J.~Roberts.
\newblock Classical $6j$-symbols and the tetrahedron.
\newblock {\em Geom. Topol.}, 3:21--66, 1999.

\bibitem{PonzanoRegge}
G.~Ponzano and T.~Regge.
\newblock {\em {Semiclassical Limit of Racah Coefficients}}, pages 1--58.
\newblock North Holland, 1968.

\bibitem{Coecke10}
B.~Coecke.
\newblock {Quantum Picturalism}.
\newblock {\em Contemp. Phys.}, 51:59--83, 2010.

\bibitem{Selinger11}
P.~Selinger.
\newblock A survey of graphical languages for monoidal categories.
\newblock In {\em {New Structures for Physics}}, volume 813 of {\em Lecture
  Notes in Physics}, pages 289--355. Springer, 2011.

\bibitem{Turaev10}
V.~G. Turaev.
\newblock {\em Quantum Invariants of Knots and 3-Manifolds}.
\newblock Walter de Gruyer, 2010.

\bibitem{Hamermesh89}
M.~Hamermesh.
\newblock {\em {Group Theory and Its Application to Physical Problems}}.
\newblock Dover, 1989.

\bibitem{HorodeckiHorodecki99}
M.~Horodecki and P.~Horodecki.
\newblock Reduction criterion of separability and limits for a class of
  distillation protocols.
\newblock {\em Phys. Rev. A}, 59:4206, 1999.

\bibitem{NielsenPetz05}
M.~A. Nielsen and D.~Petz.
\newblock A simple proof of the strong subadditivity inequality.
\newblock {\em Quant. Inf. Comp.}, 5:507--315, 2005.

\bibitem{Ruskai07a}
M.~B. Ruskai.
\newblock {Another Short and Elementary Proof of Strong Subadditivity of
  Quantum Entropy}.
\newblock {\em Rep. M}, 60:1--12, 2007.

\bibitem{Effros09}
E.~G. Effros.
\newblock A matrix convexity approach to some celebrated quantum inequalities.
\newblock {\em Proc. Nat. Acad. Sci. U.S.A.}, 106:1006, 2009.

\bibitem{Renner05}
R.~Renner.
\newblock {\em {Security of Quantum Key Distribution}}.
\newblock PhD thesis, ETH Zurich, 2005.

\bibitem{Gromov13}
M.~Gromov.
\newblock {In a Search for a Structure, Part 1: On Entropy.}
\newblock
  \url{http://www.ihes.fr/~gromov/PDF/structre-serch-entropy-july5-2012.pdf},
  2013.

\bibitem{Ruskai05}
M.~B. Ruskai.
\newblock {Inequalities for Quantum Entropy: A Review with Conditions for
  Equality}.
\newblock {\em J. Math. Phys.}, 43:4358--4375, 2005.

\bibitem{backens10}
M.~Backens.
\newblock {\em The $6j$-symbols and an extended version of {H}orn's problem}.
\newblock {Semester Thesis}, ETH Zurich, 2010.

\bibitem{horn62}
A.~Horn.
\newblock {Eigenvalues of sums of Hermitian matrices}.
\newblock {\em Pacific J. Math.}, 12:225--241, 1962.

\bibitem{Klyachko-Horn}
A.~A. Klyachko.
\newblock {Stable bundles, representation theory and Hermitian operators}.
\newblock {\em Selecta Math.}, 4:419--445, 1998.

\bibitem{KnuTao99}
A.~Knutson and T.~Tao.
\newblock {The Honeycomb model of $\mathrm{GL}_n(\mathbb{C})$ Tensor Products
  {I}: Proof of the Saturation Conjecture}.
\newblock {\em J. Am. Math. Soc.}, 12:1055--1090, 1999.

\bibitem{christandl-horn}
M.~Christandl.
\newblock A quantum information-theoretic proof of the relation between
  {H}orn's problem and the {L}ittlewood-{R}ichardson coefficients.
\newblock {\em Proceedings of CiE 2008, Lecture Notes in Computer Science},
  5028:120--128, 2008.

\bibitem{higuchi_sudbery_szulc_2003_onequbit}
A.~Higuchi, A.~Sudbery, and J.~Szulc.
\newblock One-qubit reduced states of a pure many-qubit state: polygon
  inequalities.
\newblock {\em Phys. Rev. Lett.}, 90:107902, 2003.

\bibitem{Nishiyama00}
K.~Nishiyama.
\newblock {Restriction of the irreducible representations of $\mathrm{GL}_n$ to
  the symmetric group $S_n$}.
\newblock \url{http://www.math.kyoto-u.ac.jp/~kyo/mypapere.html}, 2000.

\bibitem{Fulton97}
W.~Fulton.
\newblock {\em {Young Tableaux}}.
\newblock London Mathematical Society, 1997.

\bibitem{JamesKerber81}
G.~James and A.~Kerber.
\newblock {\em The representation theory of the symmetric group}.
\newblock Addison-Wesley, 1981.

\bibitem{Littlewood58}
D.~E. Littlewood.
\newblock Products and plethysms of characters with orthogonal, symplectic and
  symmetric groups.
\newblock {\em Canad. J. Math.}, 10:17--32, 1958.

\bibitem{Murnaghan55}
F.~D. Murnaghan.
\newblock On the analysis of the {K}ronecker product of irreducible
  representations of {$S_n$}.
\newblock {\em Proc. Nat. Acad. Sci. U.S.A.}, 41:515--518, 1955.

\end{thebibliography}

\end{document}